\documentclass[twoside]{article}

%

\usepackage[accepted]{aistats2019}

\usepackage{amsthm}
\usepackage{ifamath}

\usepackage{xcolor}

\usepackage{graphicx}
 \graphicspath{{Figures/}}
 \usepackage{subcaption}
\usepackage{multicol}
\usepackage{dsfont}

\usepackage{tikz}
\usetikzlibrary{fit}
\tikzset{%
  highlight/.style={rectangle,rounded corners,fill=red!15,
    fill opacity=0.7,thick,inner sep=0pt}
}

\usepackage[customcolors]{hf-tikz}
\hfsetfillcolor{red!15}
 \hfsetbordercolor{red!30}
 \usepackage{amsmath}
\usepackage{amssymb}
%




\begin{document}

%

%

\twocolumn[

\aistatstitle{Bounding Inefficiency of Equilibria in Continuous Actions Games using Submodularity and Curvature }

\aistatsauthor{ Pier Giuseppe Sessa \And Maryam Kamgarpour \And  Andreas Krause }
\aistatsaddress{ ETH Zürich \And  ETH Zürich  \And ETH Zürich } ]

\begin{abstract}
Games with continuous strategy sets arise in several machine learning problems (e.g. adversarial learning). For such games, simple no-regret learning algorithms exist in several cases and ensure convergence to coarse correlated equilibria (CCE).  The efficiency of such equilibria with respect to a \emph{social function}, however, is not well understood. In this paper, we define the class of \emph{valid utility games with continuous strategies} and provide efficiency bounds for their CCEs. Our bounds rely on the social function being a monotone DR-submodular function. We further refine our bounds based on the curvature of the social function. Furthermore, we extend our efficiency bounds to a class of non-submodular functions that satisfy approximate submodularity properties. Finally, we show that valid utility games with continuous strategies can be designed to maximize monotone DR-submodular functions subject to disjoint constraints with approximation guarantees. The approximation guarantees we derive are based on the efficiency of the equilibria of such games and can improve the existing ones in the literature. We illustrate and validate our results on a budget allocation game and a sensor coverage problem. \footnote{This work was gratefully supported by Swiss National Science Foundation, under the grant SNSF $200021\_172781$, and by the European Union’s Horizon 2020 ERC grant $815943$.}
\end{abstract}
\vspace{-.5em}
\section{Introduction}

Game theory is a powerful tool for modelling many real-world multi-agent decision making problems \cite{cesa-bianchi2006}.   In machine learning, game theory has received substantial interest in the area of adversarial learning (e.g. generative adversarial networks \cite{goodfellow2014}) where models are trained via games played by competing modules~\cite{balduzzi2018}. Apart from modelling interactions among agents, game theory is also used in the context of distributed optimization. In fact, specific games can be \emph{designed} so that multiple entities can contribute to optimizing a common objective function \cite{marden2013,Li2013}.

A game is described by a set of players aiming to maximize their individual payoffs which depend on each others' strategies. The efficiency of a joint strategy profile is measured with respect to a \emph{social function}, which depends on the strategies of all the players. When the strategies for each player are uncountably infinite, the game is said to be \emph{continuous}. 

Continuous games describe a broad range of problems where integer or binary strategies may have limited expressiveness. In market sharing games \cite{goemans2006}, for instance, competing firms may invest continuous amounts in each market, or may produce an infinitely divisible product. Also, several integer problems can be generalized to continuous domains. For example, in budget allocation problems continuous amounts can be allocated to each media channel \cite{bian2017a}. In machine learning, many games are naturally continuous \cite{lee2018}. 

\vspace{-0.5em}
\subsection{Related work}
\vspace{-0.5em}

Although continuous games are finding increasing applicability, from a theoretical viewpoint they are less understood than games with finitely many strategies. Recently, no-regret learning algorithms \cite{cesa-bianchi2006} have been proposed for continuous games under different set-ups \cite{zinkevich2003,stolz2007,mertikopoulos2018}. Similarly to finite games \cite{cesa-bianchi2006}, these no-regret dynamics converge to \emph{coarse correlated equilibria} (CCEs) \cite{stolz2007,balduzzi2018}, the weakest class of equilibria which includes pure Nash equilibria, mixed Nash equilibria and correlated equilibria. However, CCEs may be highly suboptimal for the social function. A central open question is to understand the (in)efficiency of such equilibria. Differently from the finite case, where bounds on such inefficiency are known for a large variety of games \cite{roughgarden2015}, in continuous games this question is not well understood.

To measure the inefficiency of CCEs arising from no-regret dynamics, \cite{blum2008} introduces the \emph{price of total anarchy}. This notion generalizes the well-established price of anarchy (PoA) of \cite{koutsoupias1999} which instead measures the inefficiency of the worst pure Nash equilibria of the game. There are numerous reasons why players may not reach a pure Nash equilibrium \cite{blum2008,stein2011,roughgarden2015}. In contrast, regret minimization can be done by each player via simple and efficient algorithms \cite{blum2008}. 
 Recently, \cite{roughgarden2015} generalizes the price of total anarchy defining the \emph{robust} PoA which measures the inefficiency of any CCE (including the ones arising from regret minimization), and provides examples of games for which it can be bounded.


In the context of distributed optimization, where a game is designed to optimize a given objective \cite{marden2013}, bounds on the robust price of anarchy find a similar importance. In this setting, a distributed scheme to optimize the social function is to let each player implement a no-regret learning algorithm based only on its payoff information. A bound on the robust PoA provides an approximation guarantee to such optimization scheme.

Bounds on the robust PoA provided by \cite{roughgarden2015} mostly concern games with finitely many actions. A class of such games are the \emph{valid utility games} introduced by \cite{vetta2002}. In such games, the social function is a \emph{submodular} set function and, using this property, \cite{roughgarden2015} showed that the PoA bound derived in \cite{vetta2002} indeed extends to all CCEs of the game. This class of games covers numerous applications including market sharing, facility location, and routing problems, and were used by \cite{marden2013} for distributed optimization. 
Strategies consist of selecting subsets of a ground set, and can be equivalently represented as binary decisions.
Recently, authors in \cite{maehara2015} extend the notion of valid utility games to integer domains. By leveraging properties of submodular functions over integer lattices, they show that the robust PoA bound of \cite{roughgarden2015} extends to the integer case. The notion of submodularity has recently been extended to continuous domains, mainly in order to design efficient optimization algorithms \cite{bach2015,bian2017a,hassani2017}. To the best of author's knowledge, such notion has not been utilized for analyzing efficiency of equilibria of games over continuous domains.

\vspace{-.5em}
\subsection{Our contributions}
\vspace{-0.5em}

We bound the robust price of anarchy for a subclass of continuous games, which we denote as \emph{valid utility games with continuous strategies}. They are the continuous counterpart of the valid utility games introduced by \cite{vetta2002} and \cite{maehara2015} for binary and integer strategies, respectively. 
Our bounds rely on a particular game structure and on the social function being a monotone DR-submodular function \cite[Definition 1]{bian2017a}. Hence, we define the \emph{curvature} of a monotone DR-submodular function on continuous domains, analyze its properties, and use it to refine our bounds. We also show that our bounds can be extended to non-submodular functions which have `approximate' submodularity properties. This is in contrast with \cite{vetta2002,maehara2015} where only submodular social functions were considered. Finally, employing the machinery of \cite{marden2013}, we show that valid utility games with continuous strategies can be designed to maximize non convex/non concave functions in a distributed fashion with approximation guarantees. Depending on the curvature of the function, the obtained guarantees can improve the ones available in the literature. 

\vspace{-.5em}
\subsection{Notation}
\vspace{-.5em}

We denote by $\mathbf{e}_i , \mathbf{0}$, and $\mathds{1}$, the $i^{th}$ unit vector, null vector, and vector of all ones of appropriate dimensions, respectively. Given $n \in \mathbb{N}$, with $n \geq 1 $, we define $[n]:=\{1, \ldots, n\}$. Given vectors $\mathbf{x}, \mathbf{y}$, we use $[\mathbf{x}]_i$ and $x_i$ interchangeably to indicate the $i^{th}$ coordinate of $\mathbf{x}$, and $(\mathbf{x}, \mathbf{y})$ to denote the vector obtained from their concatenation, i.e., $(\mathbf{x}, \mathbf{y}):= [\mathbf{x}^\top, \mathbf{y}^\top]^\top$. Moreover, for vectors of equal dimension, $\mathbf{x} \leq \mathbf{y}$ means $x_i \leq y_i$ for all $i$. Given $\mathbf{x} \in \reals{n}$ and $j\in \{0,\ldots,n\}$, we define $[\mathbf{x}]_1^j := (x_1, \ldots, x_j, 0, \ldots, 0) \in \reals{n}$ with  $[\mathbf{x}]_1^0 = \mathbf{0}$. A function $f: \setbf{X} \subseteq \reals{n} \rightarrow \mathbb{R}$ is \emph{monotone} if, for all $\mathbf{x} \leq \mathbf{y} \in \setbf{X}$, $f(\mathbf{x}) \leq f(\mathbf{y})$. Moreover, $f$ is \emph{affine} if for all $\mathbf{x}, \mathbf{y} \in \setbf{X}$, $f(\mathbf{x}+\mathbf{y}) - f(\mathbf{x})= f(\mathbf{y})-f(\mathbf{0})$.
\vspace{-.1em}
\section{Problem formulation and examples}\label{sec:problem_formulation}
\vspace{-.5em}

We consider a class of non-cooperative continuous games, where each player $i$ chooses a vector $\mathbf{s}_i$ in its feasible strategy set $\mathcal{S}_i \subseteq \mathbb{R}_+^d$. We let $N$ be the number of players,  $\mathbf{s} = (\mathbf{s}_1, \ldots, \mathbf{s}_N)$ be the vector of all the strategy profiles, i.e., the outcome of the game, and $\setbf{S} = \prod_{i=1}^N \mathcal{S}_i  \subseteq  \mathbb{R}_+^{Nd}$ be the joint strategy space. For simplicity, we assume each strategy $\mathbf{s}_i$ is $d$-dimensional, although different dimensions could exist for different players. 
Each player aims to maximize her payoff function $\pi_i : \setbf{S} \rightarrow \mathbb{R} $, which in general depends on the strategies of all the players. We let the social function be $\gamma : \mathbb{R}_+^{Nd}   \rightarrow \mathbb{R}_+ $. For the rest of the paper we assume $\gamma(\mathbf{0}) = 0$. We denote such games with the tuple $\set{G}=(N, \{ \set{S}_i\}_{i=1}^N,  \{ \pi_i\}_{i=1}^N , \gamma)$. Given an outcome $\mathbf{s}$ we use the standard notation $(\mathbf{s}_i , \mathbf{s}_{-i})$ to denote the outcome where player $i$ chooses strategy $\mathbf{s}_i$ and the other players select strategies $\mathbf{s}_{-i} = (\mathbf{s}_1,\ldots, \mathbf{s}_{i-1}, \mathbf{s}_{i+1}, \ldots \mathbf{s}_N)$.

A pure Nash equilibrium is an outcome $\mathbf{s} \in \setbf{S}$ such that
\begingroup
\setlength{\abovedisplayskip}{0pt}
\setlength{\belowdisplayskip}{0pt}
\begin{equation*}
\pi_i(\mathbf{s})  \geq  \pi_i(\mathbf{s}_i', \mathbf{s}_{-i})  \,,
\end{equation*}
\endgroup
for every player $i$ and for every strategy $\mathbf{s}_i' \in \set{S}_i$. A coarse correlated equilibrium (CCE) is a probability distribution $\sigma$ over the outcomes $\setbf{S}$ that satisfies
\begingroup
\setlength{\abovedisplayskip}{5pt}
\setlength{\belowdisplayskip}{5pt}
\begin{equation*}
\mathbb{E}_{\mathbf{s} \sim \sigma} [ \pi_i(\mathbf{s})]  \geq \mathbb{E}_{\mathbf{s} \sim \sigma} [ \pi_i(\mathbf{s}_i', \mathbf{s}_{-i})]  \,,
\end{equation*}
\endgroup
for every player $i$ and for every strategy $\mathbf{s}_i' \in \set{S}_i$.
CCE's are the weakest class of equilibria and they include pure Nash, mixed Nash, and correlated equilibria \cite{roughgarden2015}.

Since each player selfishly maximizes her payoff, the outcome $\mathbf{s}\in \setbf{S}$ of the game is typically suboptimal for the social function $\gamma$. To measure such suboptimality, 
\cite{roughgarden2015} introduced the \emph{robust price of anarchy} (robust PoA) which measures the inefficiency of any CCE.  
Given  $\set{G}$, we let $\Delta$ be the set of all the CCEs of $\set{G}$ and define the robust PoA as the quantity
\begin{equation*}
PoA_{CCE} := \frac{\max_{\mathbf{s} \in \setbf{S}} \gamma(\mathbf{s})}{\min_{\sigma \in \Delta} \mathbb{E}_{\mathbf{s} \sim \sigma} [ \gamma(\mathbf{s})] } \,.
\end{equation*}
It can be easily seen that $PoA_{CCE} \geq 1$. As discussed in the introduction, $PoA_{CCE}$ has two important implications. In multi-agent systems, $PoA_{CCE}$ bounds the efficiency of no-regret learning dynamics followed by the selfish agents. In fact, these dynamics converge to a CCE of the game \cite{stolz2007,balduzzi2018}. In the context of distributed optimization, no-regret learning algorithms can be implemented distributively to optimize a given function and $PoA_{CCE}$ certifies the overall approximation guarantee. Bounds for $PoA_{CCE}$, however, were obtained mostly for games with finitely many actions \cite{roughgarden2015}.



In this paper, we are interested in upper bounding $PoA_{CCE}$ for continuous games $\set{G}$ defined above. To motivate our results, we present two relevant examples of such games. The first one is a budget allocation game, while in the second example a continuous game can be designed for distributed maximization in the spirit of \cite{marden2013}. We will come back to these examples in \refsec{sec:main_results} and derive upper bounds for their respective $PoA_{CCE}$'s. 

\begin{example}[\textbf{Continuous budget allocation game}]\label{example_1}
A set of $N$ advertisers enters a market consisting of a set of $d$ media channels. By allocating (or investing) part of their budget in each advertising channel, the goal of each advertiser is to maximize the expected number of \emph{activated} customers, i.e., customers who purchase her product. The market is described by a bipartite graph $G = (\set{R} \cup \set{T}, \set{E})$, where the left vertices $\set{R}$ denote channels and the right vertices $\set{T}$ denote customers, with $d = \vert \set{R} \vert$. For each advertiser $i$ and edge $(r,t) \in \set{E}$, $p_i(r,t)\in [0,1]$ is the probability that advertiser $i$ activates customer $t$ via channel $r$. Each advertiser chooses a strategy $\mathbf{s}_i \in \reals{d}_+$, which represents the amounts allocated (or invested) to each channel, subject to budget constraints $\set{S}_i = \{ \mathbf{s}_i \in \reals{d} : \mathbf{c}_i^\top \mathbf{s}_i \leq b_i , \mathbf{0} \leq \mathbf{s}_i \leq \bar{\mathbf{s}}_i\}$. This generalizes the set-up in \cite{maehara2015}, where strategies $\mathbf{s}_i$ are integer. Hence, we consider the continuous version of the game modeled by \cite{maehara2015}.
For every customer $t\in \set{T}$ and advertiser $i \in [N]$, we define $\Gamma(t) = \{ r \in \set{R} : (r,t)\in \set{E} \}$ and the quantity 
\begingroup
\setlength{\abovedisplayskip}{5pt}
\setlength{\belowdisplayskip}{5pt}
\begin{equation*}
P_i(\mathbf{s}_i, t) = 1 - \prod\nolimits_{r \in \Gamma(t)} (1-p_i(r,t))^{[\mathbf{s}_i]_r} \, ,
\end{equation*}
\endgroup
which is the probability that $i$ activates $t$ when the other advertisers are ignored. For each customer $t$, a permutation $\rho \in \set{P}_N$ is drawn uniformly at random, where $\set{P}_N$ is the set of all permutations of $[N]$. Then, according to $\rho$ each advertiser sequentially attempts to activate customer $t$. Hence, for a given allocation $\mathbf{s} = (\mathbf{s}_1, \ldots, \mathbf{s}_N) \in \setbf{S}=\prod_{i=1}^N \set{S}_i$, the payoff of each advertiser can be written in closed form as \cite{maehara2015}:
\begingroup
\setlength{\abovedisplayskip}{5pt}
\setlength{\belowdisplayskip}{5pt}
\begin{equation*}
\pi_i(\mathbf{s}) = \frac{1}{N !} \sum\nolimits_{t \in \set{T}} \sum\nolimits_{\rho \in \set{P}_N} P_i(\mathbf{s}_i,t) \prod_{j \prec_\rho i}(1- P_j(\mathbf{s}_j,t)) \, ,
\end{equation*}
\endgroup
where $j \prec_\rho i$ indicates that $j$ precedes $i$ in $\rho$. The term $\pi_i(\mathbf{s})$ represents the expected number of customers activated by advertiser $i$ in allocation $\mathbf{s}$. 
The goal of the market analyst, which assumes the role of the game planner, is to maximize the expected number of customers activated. Hence, for any  $\mathbf{s}$, the social function $\gamma$ is
\begingroup
\setlength{\abovedisplayskip}{5pt}
\setlength{\belowdisplayskip}{5pt}
\begin{equation*}
\gamma(\mathbf{s}) = \sum\nolimits_{i=1}^N \pi_i(\mathbf{s}) = \sum\nolimits_{t \in \set{T}} \Big(  1 - \prod\nolimits_{i=1}^N(1- P_i(\mathbf{s}_i,t)) \Big) \,.
\end{equation*} 
\endgroup
\end{example}

\begin{example}[\textbf{Sensor coverage with continous assignments}]\label{example_2}
Given a set of $N$ autonomous sensors, we seek to monitor a finite set of $d$ locations in order to maximize the 
probability of detecting an event. For each sensor, a continuous variable $x_i \in \reals{d}_+$ indicates the energy assigned (or time spent) to each location, subject to budget constraints $\set{X}_i := \{ \mathbf{x}_i \in \reals{d}: \mathbf{c}_i^\top \mathbf{x}_i \leq b_i , \mathbf{0} \leq \mathbf{x}_i \leq \bar{\mathbf{x}}_i\}$. This generalizes the well-known sensor coverage problem studied in \cite{marden2013} (and previous works), where $\mathbf{x}_i$'s are binary and indicate the locations sensor $i$ is assigned to. The probability that sensor $i$ detects an event in location $r$ is $1- (1-p_i^r)^{[x_i]_r}$, with $0 \leq p_i^r \leq 1$, and it increases as more energy is assigned to the location. Hence, given a strategy $\mathbf{x} = (\mathbf{x}_1, \ldots, \mathbf{x}_N)$, the joint probability of detecting an event in location $r$ is
\begingroup
\setlength{\abovedisplayskip}{5pt}
\setlength{\belowdisplayskip}{5pt}
\begin{equation*}
P(r,\mathbf{x}) = 1 - \prod\nolimits_{i \in [N]}(1-p_i^r)^{[\mathbf{x}_i]_r} \,.
\end{equation*}
\endgroup
The goal of the planner is to maximize the probability of detecting an event
\begingroup
\setlength{\abovedisplayskip}{5pt}
\setlength{\belowdisplayskip}{5pt}
\begin{equation*}
\gamma(\mathbf{x}) = \sum\nolimits_{r \in [d]}w_r \: P(r,\mathbf{x})\,,
\end{equation*}
\endgroup
where $w_r$'s represent the a priori probability that an event occurs in location $r$. As in \cite{marden2013}, we can set up a continuous game $\set{G}=(N, \{ \set{S}_i\}_{i=1}^N,  \{ \pi_i\}_{i=1}^N , \gamma)$ where $\set{S}_i = \set{X}_i$ for each $i$, and $\pi_i$'s are designed so that good monitoring solutions can be obtained when each player selfishly maximizes her payoff. 
\end{example}

\emph{Proofs of the upcoming propositions and remarks are presented in \refapp{A}}.
\vspace{-0.5em}
\section{Main results}\label{sec:main_results}
\vspace{-0.5em}

We derive $PoA_{CCE}$ bounds for a sublass of continuous games $\set{G}$ by extending the valid utility games considered in \cite{vetta2002} and \cite{maehara2015} to continuous strategy sets.
Hence, in \refdef{def:valid_continuous_game} we will define the class of \emph{valid utility games with continuous strategies}. As will be seen, the two problems described above and several other examples fall into this class. 
At the end of the section, we will show that valid utility games can be designed to maximize non-convex/non-concave objectives in a distributed fashion with approximation guarantees.  
\vspace{-1em}
\subsection{Robust PoA bounds}
\vspace{-0.5em}

As in \cite{vetta2002,maehara2015}, the $PoA_{CCE}$ bounds obtained rely on the social function $\gamma$ experiencing diminishing returns (DR). Differently from set functions, in continuous (and integer) domains, different notions of DR exist. Similarly to \cite{maehara2015}, our first main result relies on $\gamma$ satisfying the strongest notion of DR, also known as \texttt{DR} property \cite{bian2017a}, which we define in \refdef{def:DR_property}. 
Moreover, as in \cite{vetta2002} our bound can be refined depending on the  \emph{curvature} of $\gamma$. While DR properties have been recently studied also in continuous domains, notions of curvature of a submodular function were only explored for set functions \cite{conforti1984,iyer2013} (see \cite[Appendix C]{bian2017b} for a comparison of the existing notions). Hence, in \refdef{def:curvature} we define the curvature of a monotone DR-submodular function on continuous domains.

\begingroup
\setlength{\abovedisplayskip}{6pt}
\setlength{\belowdisplayskip}{6pt}
\begin{definition}[\texttt{DR} property]\label{def:DR_property}
  A function $f: \setbf{X} = \prod_{i=1}^n \set{X}_i \rightarrow \mathbb{R}$ with $\set{X}_i \subseteq \reals{}$ is \emph{DR-submodular} if for all $\mathbf{x} \leq \mathbf{y} \in \setbf{X}$, $\forall i \in [n], \forall k \in \mathbb{R}_+$ such that $(\mathbf{x} + k \mathbf{e}_i)$ and $(\mathbf{y} + k \mathbf{e}_i)$ are in $ \setbf{X}$,
  \vspace{-.5em}
\begin{equation*}
f(\mathbf{x} + k \mathbf{e}_i ) - f(\mathbf{x})  \geq    f(\mathbf{y} + k \mathbf{e}_i) - f(\mathbf{y}) \,.
\end{equation*} 
\end{definition} 
When restricted to binary sets $\setbf{Z} = \{0,1\}^n$, \refdef{def:DR_property} coincides with the standard notion of submodularity for set functions.
An equivalent characterization of the \texttt{DR} property for a twice-differentiable function is that all the entries of its Hessian are non-positive \cite{bian2017a}:
\begin{equation*}
\forall \mathbf{x} \in \setbf{X}, \quad \frac{\partial^2 f(\mathbf{x})}{\partial x_i \partial x_j} \leq 0 ,\quad \forall i, j  \, .
\end{equation*}
\endgroup

 \begingroup
\setlength{\abovedisplayskip}{5pt}
\setlength{\belowdisplayskip}{5pt}
\begin{definition}[curvature]\label{def:curvature}Given a monotone DR-submodular function $f: \setbf{X} \subseteq \reals{n}_+  \rightarrow \mathbb{R}$, and a set $\setbf{Z} \subseteq \setbf{X}$ with $\mathbf{0} \in \setbf{Z}$, we define the curvature of $f$ with respect to $\setbf{Z}$ by
\vspace{-.5em}
\begin{equation*}
  \alpha( \setbf{Z}) = 1 - \inf_{\substack{\mathbf{x}\in \setbf{Z} , i \in [n]: \\\mathbf{x}+ k \mathbf{e}_i \in \setbf{Z}  }} \: {\lim_{k\rightarrow 0^+}\frac{f(\mathbf{x}+ k \mathbf{e}_i) - f(\mathbf{x})}{f(k \mathbf{e}_i)- f(\mathbf{0})}}   \,.
 \end{equation*}
\end{definition} 
\begin{remark} \label{remark:curvature_in_zero_one}
For any monotone function  $f: \reals{n}  \rightarrow \mathbb{R}$ and $\forall \setbf{Z} \subseteq \reals{n}$ with $\mathbf{0} \in \setbf{Z}$, $\alpha(\setbf{Z}) \in [0,1]$.
\end{remark}
When restricted to binary sets $\setbf{Z} = \{0,1\}^n$, \refdef{def:curvature} coincides with the \emph{total curvature} defined in \cite{conforti1984}. Moreover, if $f$ is montone DR-submodular and \emph{differentiable}, its curvature with respect to a set $\setbf{Z}$ can be computed as: 
\begin{equation*}
 \alpha( \setbf{Z}) = 1 - \inf\nolimits_{\substack{\mathbf{x}\in \setbf{Z} \\ i \in [n]}}\frac{[\nabla f(\mathbf{x})]_i}{[\nabla f(\mathbf{0})]_i} \,.
\end{equation*}
\endgroup


Based on the previous definitions, we define the class of valid utility games with continuous strategies.

\begin{definition}\label{def:valid_continuous_game}
A game $\set{G}=(N, \{ \set{S}_i\}_{i=1}^N,  \{ \pi_i\}_{i=1}^N , \gamma)$ is a \emph{valid utility game with continuous strategies} if: 
\vspace{-.5em}
\begin{itemize}
\item[i)] The function $\gamma$ is monotone DR-submodular.
\vspace{-.5em}

\item[ii)]For each player $i$ and for every outcome $\mathbf{s}$,  $\pi_i(\mathbf{s} ) \geq \gamma(\mathbf{s}) - \gamma(\mathbf{0}, \mathbf{s}_{-i}) $. 
\vspace{-.5em}

\item[iii)]For every outcome $\mathbf{s}$, $\gamma(\mathbf{s})  \geq \sum_{i=1}^N \pi_i(\mathbf{s})$.
\end{itemize}
\end{definition}
\vspace{-.5em}

Intuitively, the conditions above ensure that the payoff for each player is at least her contribution to the social function and that optimizing $\gamma$ is somehow bind to the goals of the players. Defining the set $\setbf{\tilde{S}} :=\{ \mathbf{x} \in \reals{Nd}_+ \mid \mathbf{0} \leq \mathbf{x} \leq  \mathbf{s}_{max} \}$, with $\mathbf{s}_{max}$ such that $ \forall \mathbf{s},\mathbf{s}' \in \setbf{S} , \, \mathbf{s} +  \mathbf{s}' \leq \mathbf{s}_{max} $, we can establish the following main theorem. 

\begin{theorem}\label{thm:main_theorem}
 Let $\set{G}=(N, \{ \set{S}_i\}_{i=1}^N,  \{ \pi_i\}_{i=1}^N , \gamma)$ be a valid utility game with continuous strategies with social function $\gamma :\reals{Nd}_+ \rightarrow \reals{}_+
$ having \emph{curvature} $\alpha(\setbf{\tilde{S}}) \leq \alpha$. Then, $PoA_{CCE} \leq (1 + \alpha)$. 
\end{theorem}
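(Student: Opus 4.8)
The plan is to combine the CCE inequality with the valid-utility structure of the game and the \texttt{DR}/curvature properties of $\gamma$, in the spirit of robust price-of-anarchy analyses for finite games. Fix a maximizer $\mathbf{s}^* \in \setbf{S}$ of $\gamma$ and let $\sigma \in \Delta$ be an arbitrary CCE; it suffices to show $(1+\alpha)\,\mathbb{E}_{\mathbf{s}\sim\sigma}[\gamma(\mathbf{s})] \ge \gamma(\mathbf{s}^*)$. First I would use, for each player $i$, the CCE inequality with the unilateral deviation $\mathbf{s}_i' = \mathbf{s}_i^* \in \set{S}_i$, then bound $\pi_i(\mathbf{s}_i^*,\mathbf{s}_{-i}) \ge \gamma(\mathbf{s}_i^*,\mathbf{s}_{-i}) - \gamma(\mathbf{0},\mathbf{s}_{-i})$ by property (ii) of \refdef{def:valid_continuous_game}, sum over $i$, and invoke property (iii) in the form $\mathbb{E}_\sigma\!\big[\sum_i\pi_i(\mathbf{s})\big] \le \mathbb{E}_\sigma[\gamma(\mathbf{s})]$. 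This gives
\[
\mathbb{E}_{\mathbf{s}\sim\sigma}[\gamma(\mathbf{s})] \;\ge\; \sum_{i=1}^N \mathbb{E}_{\mathbf{s}\sim\sigma}\big[\gamma(\mathbf{s}_i^*,\mathbf{s}_{-i}) - \gamma(\mathbf{0},\mathbf{s}_{-i})\big].
\]

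The second step is a purely deterministic submodularity estimate: for every fixed outcome $\mathbf{s}$,
\[
\sum_{i=1}^N \big[\gamma(\mathbf{s}_i^*,\mathbf{s}_{-i}) - \gamma(\mathbf{0},\mathbf{s}_{-i})\big] \;\ge\; \gamma(\mathbf{s}^* + \mathbf{s}) - \gamma(\mathbf{s}).
\]
To prove this I would telescope the right-hand side along the chain $\mathbf{s} = \mathbf{t}^{(0)} \le \mathbf{t}^{(1)} \le \ldots \le \mathbf{t}^{(N)} = \mathbf{s}^*+\mathbf{s}$, where $\mathbf{t}^{(k)}$ is obtained from $\mathbf{s}$ by replacing its first $k$ blocks $\mathbf{s}_1,\ldots,\mathbf{s}_k$ by $\mathbf{s}_1+\mathbf{s}_1^*,\ldots,\mathbf{s}_k+\mathbf{s}_k^*$. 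The $k$-th telescoped term $\gamma(\mathbf{t}^{(k)}) - \gamma(\mathbf{t}^{(k-1)})$ adds the block vector $\mathbf{s}_k^*$ on top of a point which in every coordinate outside block $k$ dominates $\mathbf{s}_{-k}$ and whose $k$-th block equals $\mathbf{s}_k \ge \mathbf{0}$; the quantity $\gamma(\mathbf{s}_k^*,\mathbf{s}_{-k}) - \gamma(\mathbf{0},\mathbf{s}_{-k})$ adds the same vector $\mathbf{s}_k^*$ on top of the pointwise-smaller point $(\mathbf{0},\mathbf{s}_{-k})$, so iterating the single-coordinate \texttt{DR} property of \refdef{def:DR_property} coordinate by coordinate inside block $k$ yields $\gamma(\mathbf{s}_k^*,\mathbf{s}_{-k}) - \gamma(\mathbf{0},\mathbf{s}_{-k}) \ge \gamma(\mathbf{t}^{(k)}) - \gamma(\mathbf{t}^{(k-1)})$, and summing over $k$ gives the estimate. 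Taking expectations over $\sigma$ and combining with the first display gives $2\,\mathbb{E}_\sigma[\gamma(\mathbf{s})] \ge \mathbb{E}_\sigma[\gamma(\mathbf{s}^*+\mathbf{s})]$, which already recovers $PoA_{CCE} \le 2$ (the case $\alpha=1$) via monotonicity $\gamma(\mathbf{s}^*+\mathbf{s}) \ge \gamma(\mathbf{s}^*)$.

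The last step sharpens $\gamma(\mathbf{s}^*+\mathbf{s}) \ge \gamma(\mathbf{s}^*)$ using the curvature. Since $\mathbf{s}^*$, $\mathbf{s}^*+\mathbf{s}$, and the whole segment between them lie in $\setbf{\tilde{S}}$ by definition of $\mathbf{s}_{max}$, and $\mathbf{s}\ge\mathbf{0}$, the hypothesis $\alpha(\setbf{\tilde{S}}) \le \alpha$ together with \refdef{def:curvature} implies
\[
\gamma(\mathbf{s}^*+\mathbf{s}) - \gamma(\mathbf{s}^*) \;\ge\; (1-\alpha)\,\big(\gamma(\mathbf{s}) - \gamma(\mathbf{0})\big) \;=\; (1-\alpha)\,\gamma(\mathbf{s}).
\]
Indeed, decomposing the increment $\mathbf{s}$ into coordinate steps, each marginal of $\gamma$ along a coordinate taken inside $\setbf{\tilde{S}}$ is at least $(1-\alpha)$ times the corresponding marginal taken at $\mathbf{0}$ (this is exactly \refdef{def:curvature}; one uses concavity of $\gamma$ along each coordinate ray, a consequence of \texttt{DR}, to pass from the pointwise limit to finite steps, and for differentiable $\gamma$ this is immediate from the gradient form of the curvature stated after \refdef{def:curvature}), while the marginals at $\mathbf{0}$ sum to at least $\gamma(\mathbf{s}) - \gamma(\mathbf{0})$ by \texttt{DR}-submodularity. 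Plugging this into $2\,\mathbb{E}_\sigma[\gamma(\mathbf{s})] \ge \mathbb{E}_\sigma[\gamma(\mathbf{s}^*+\mathbf{s})] \ge \gamma(\mathbf{s}^*) + (1-\alpha)\,\mathbb{E}_\sigma[\gamma(\mathbf{s})]$ and rearranging gives $(1+\alpha)\,\mathbb{E}_\sigma[\gamma(\mathbf{s})] \ge \gamma(\mathbf{s}^*)$, i.e.\ $PoA_{CCE} \le 1+\alpha$.

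I expect the main obstacle to be the deterministic estimate $\sum_i[\gamma(\mathbf{s}_i^*,\mathbf{s}_{-i}) - \gamma(\mathbf{0},\mathbf{s}_{-i})] \ge \gamma(\mathbf{s}^*+\mathbf{s}) - \gamma(\mathbf{s})$: since \refdef{def:DR_property} only controls single-coordinate increments, extending it to whole-block vector increments while carefully tracking the base points along the telescoping chain requires a somewhat delicate coordinate-by-coordinate induction. A secondary technical point is making the curvature step rigorous when $\gamma$ is merely monotone \texttt{DR}-submodular rather than differentiable, i.e.\ upgrading the pointwise limit in \refdef{def:curvature} to the finite-increment inequality $\gamma(\mathbf{x}+\mathbf{v}) - \gamma(\mathbf{x}) \ge (1-\alpha)\gamma(\mathbf{v})$ for $\mathbf{x},\mathbf{x}+\mathbf{v} \in \setbf{\tilde{S}}$.
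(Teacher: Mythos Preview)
Your proposal is correct and follows essentially the same route as the paper: condition~(ii) to pass from $\pi_i$ to marginal gains of $\gamma$, a DR-telescoping argument to reach $\gamma(\mathbf{s}+\mathbf{s}^\star)-\gamma(\mathbf{s})$, the curvature inequality $\gamma(\mathbf{s}+\mathbf{s}^\star)-\gamma(\mathbf{s}^\star)\ge(1-\alpha)\gamma(\mathbf{s})$ (the paper's \refprop{prop:equivalence_curvature}), and finally condition~(iii) with the CCE inequality. The only cosmetic difference is that the paper splits your single DR step into two---first full DR within block $i$ to go from base $(\mathbf{0},\mathbf{s}_{-i})$ to base $\mathbf{s}$, then weak DR across blocks to telescope---which it does deliberately to highlight that the cross-block step needs only weak DR (this becomes relevant in the non-submodular extension of \refsec{sec:extension}); your one-shot comparison of $(\mathbf{0},\mathbf{s}_{-k})$ with $\mathbf{t}^{(k-1)}$ is equally valid and slightly more direct.
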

\vspace{-0.5em}

We will prove \refthm{thm:main_theorem} in \refsec{sec:proof_of_thm1}. 
\begin{remark}\label{remark:bound_2}
If $\set{G}$ is a valid utility game with continuous strategies, then $PoA_{CCE} \leq 2$.
\end{remark}


\begin{remark}
The notion of valid utility games above is an exact generalization of the one by \cite{maehara2015} for integer strategy sets. Leveraging recent advances in `approximate' submodular functions, in \refsec{sec:extension} we relax condition i) and derive $PoA_{CCE}$ bounds for a strictly larger class of games.
\end{remark}

Using \refthm{thm:main_theorem}, the following proposition upper bounds $PoA_{CCE}$ of \refexa{example_1}. Our bound depends on the activation probabilities $p_i(r,t)$'s and on the connectivity of the market $G$.

\begin{proposition}\label{prop:Example_1} The budget allocation game defined in \refexa{example_1} is a valid utility game with continuous strategies. Moreover, $PoA_{CCE} \leq 1 + \alpha <2$ with 
\begingroup
\setlength{\abovedisplayskip}{0pt}
\setlength{\belowdisplayskip}{0pt}
\begin{align*}
& \alpha := 1 - \\& \min_{ i \in [N], r\in [d]} \frac{  \sum\limits_{t \in \set{T}: r \in \Gamma(t)} \ln(1-p_i(r,t))\prod\limits_{j \in [N]}(1-P_j(2\bar{\mathbf{s}}_j,t))}{ \sum\limits_{t \in \set{T}: r \in \Gamma(t)} \ln(1-p_i(r,t)) } \,.
\end{align*}
\end{proposition}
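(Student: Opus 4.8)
The plan is to first verify that the game of \refexa{example_1} fulfils the three conditions of \refdef{def:valid_continuous_game}, and then to evaluate the curvature $\alpha(\setbf{\tilde{S}})$ in closed form and apply \refthm{thm:main_theorem}.

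\textbf{Valid utility structure.} For condition i) I would rewrite each customer's contribution in exponential form. Since $1-P_i(\mathbf{s}_i,t)=\prod_{r\in\Gamma(t)}(1-p_i(r,t))^{[\mathbf{s}_i]_r}$, we have $\prod_{i=1}^N(1-P_i(\mathbf{s}_i,t))=\exp\!\big(\sum_{i,r}[\mathbf{s}_i]_r\ln(1-p_i(r,t))\big)=e^{-L_t(\mathbf{s})}$, where $L_t$ is linear with non-negative coefficients $-\ln(1-p_i(r,t))\ge 0$ (assuming $p_i(r,t)<1$). Thus the $t$-th summand of $\gamma$ is $1-e^{-L_t(\mathbf{s})}$, whose gradient is entrywise $\ge 0$ and whose Hessian equals $-e^{-L_t(\mathbf{s})}$ times the outer product of the coefficient vector with itself, hence entrywise $\le 0$; summing over $t$ and invoking the Hessian characterization of the \texttt{DR} property given after \refdef{def:DR_property} shows $\gamma$ is monotone DR-submodular. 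Condition iii) holds with equality, since $\gamma(\mathbf{s})=\sum_i\pi_i(\mathbf{s})$ by the definition of $\gamma$ in the example. For condition ii), using $P_i(\mathbf{0},t)=0$ gives $\gamma(\mathbf{s})-\gamma(\mathbf{0},\mathbf{s}_{-i})=\sum_{t\in\set{T}}P_i(\mathbf{s}_i,t)\prod_{j\ne i}(1-P_j(\mathbf{s}_j,t))$; since every $P_j(\mathbf{s}_j,t)\in[0,1]$, for each permutation $\rho\in\set{P}_N$ the factor $\prod_{j\prec_\rho i}(1-P_j(\mathbf{s}_j,t))$ is a product over a subset of $[N]\setminus\{i\}$ of numbers in $[0,1]$ and hence dominates $\prod_{j\ne i}(1-P_j(\mathbf{s}_j,t))$, so averaging the closed form of $\pi_i$ over $\rho$ yields $\pi_i(\mathbf{s})\ge\gamma(\mathbf{s})-\gamma(\mathbf{0},\mathbf{s}_{-i})$.

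\textbf{Curvature.} Because $\set{S}_i\subseteq\{\mathbf{0}\le\mathbf{s}_i\le\bar{\mathbf{s}}_i\}$, the choice $\mathbf{s}_{max}=(2\bar{\mathbf{s}}_1,\dots,2\bar{\mathbf{s}}_N)$ is admissible, so $\setbf{\tilde{S}}=\prod_i[\mathbf{0},2\bar{\mathbf{s}}_i]$ is a box. Since $\gamma$ is differentiable I would use the differentiable form of the curvature stated after \refdef{def:curvature}. Differentiating, $\partial\gamma(\mathbf{x})/\partial[\mathbf{x}_i]_r=-\sum_{t:\,r\in\Gamma(t)}\ln(1-p_i(r,t))\prod_{j=1}^N(1-P_j(\mathbf{x}_j,t))$, and at $\mathbf{x}=\mathbf{0}$ the product is $1$. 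The signs cancel and the ratio $[\nabla\gamma(\mathbf{x})]_{(i,r)}/[\nabla\gamma(\mathbf{0})]_{(i,r)}$ is a convex combination over $\{t:\,r\in\Gamma(t)\}$, with weights proportional to $-\ln(1-p_i(r,t))\ge 0$, of $q_t(\mathbf{x}):=\prod_j(1-P_j(\mathbf{x}_j,t))\in[0,1]$; each $q_t$ is non-increasing in every coordinate of $\mathbf{x}$, so the infimum over the box $\setbf{\tilde{S}}$ is attained at its top corner $\mathbf{x}=\mathbf{s}_{max}$, i.e. $\mathbf{x}_j=2\bar{\mathbf{s}}_j$. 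Taking the remaining minimum over $(i,r)$ reproduces exactly the stated expression for $\alpha=\alpha(\setbf{\tilde{S}})$, and \refthm{thm:main_theorem} gives $PoA_{CCE}\le 1+\alpha$. Under the mild non-degeneracy assumptions that $p_i(r,t)<1$ and that each channel is adjacent to at least one customer, the corner values $\prod_j(1-P_j(2\bar{\mathbf{s}}_j,t))$ are strictly positive, so the minimized ratio is $>0$ and therefore $\alpha<1$, giving $1+\alpha<2$.

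\textbf{Main obstacle.} The delicate step is the curvature computation: one must argue that minimizing the ratio of gradient coordinates over the whole box reduces to evaluation at the single corner $\mathbf{s}_{max}$. This hinges on two facts used together — that each $q_t(\mathbf{x})$ is coordinatewise non-increasing, and that for a fixed $(i,r)$ the weights $-\ln(1-p_i(r,t))$ all share the same sign, so the ratio is a genuine convex combination (not an arbitrary linear combination) of the $q_t$'s and hence inherits the monotonicity. The permutation-averaging argument for condition ii) and the entrywise Hessian check for DR-submodularity are more routine, and the degenerate cases (some $p_i(r,t)=1$, or a channel adjacent to no customer) should be excluded up front so that the displayed ratio is well defined and the strict bound $<2$ holds.
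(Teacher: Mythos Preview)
Your proposal is correct and follows essentially the same route as the paper: verify monotone DR-submodularity of $\gamma$ via the Hessian, check condition iii) by equality and condition ii) by the permutation-subset comparison, then compute the curvature using the differentiable formula with $\mathbf{s}_{max}=2(\bar{\mathbf{s}}_1,\dots,\bar{\mathbf{s}}_N)$ and apply \refthm{thm:main_theorem}. Your writeup is in fact more explicit than the paper's in two places---the paper outsources condition ii) to \cite{maehara2015}, and it asserts the curvature value without justifying that the infimum is attained at the top corner---whereas your convex-combination/monotonicity argument for the gradient ratio and your direct verification of condition ii) fill these in cleanly.
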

In our more general continuous actions framework, the obtained bound strictly improves the bound of 2 by \cite{maehara2015}, since the curvature of the social function was not considered in \cite{maehara2015}. We will visualize our bound in the experiments of \refsec{sec:examples}.
\endgroup

Using \refthm{thm:main_theorem}, we now generalize \refexa{example_2} and show that valid utility games with continuous strategies can be designed to maximize monotone DR-submodular functions subject to decoupled constraints with approximation guarantees. The proposed optimization scheme will be used in \refsec{sec:examples} to solve an instance of the sensor coverage problem defined in \refexa{example_2}.

\vspace{-0.5em}
\subsection{Game-based monotone DR-submodular maximization}\label{sec:distributed_maximization}
\vspace{-0.5em}

Consider the general problem of maximizing a monotone DR-submodular function $\gamma: \reals{n} \rightarrow \reals{}_+$ subject to \emph{decoupled} constraints $\setbf{X} = \prod_{i=1}^N \set{X}_i \subseteq \reals{n}$. We can assume $\set{X}_i \subseteq \reals{}_+$ without loss of generality \cite{bian2017a}, since otherwise one could optimize $\gamma$ over a shifted version of its constraints. Moreover, we assume $\gamma(\mathbf{0}) = 0$ for ease of exposition. Note that the class of monotone DR-submodular functions includes non concave functions.  To find approximate solutions, we set up a game 
\begingroup
\setlength{\abovedisplayskip}{9pt}
\setlength{\belowdisplayskip}{9pt}
\begin{equation*}
\hat{\set{G}} :=(N, \{ \hat{\set{S}}_i\}_{i=1}^N,  \{ \hat{\pi}_i\}_{i=1}^N , \gamma) \, ,
\end{equation*} 
\endgroup
where for each player $i$, $\hat{\set{S}}_i := \set{X}_i$, and $\hat{\pi}_i(\mathbf{s}) := \gamma(\mathbf{s}) - \gamma(\mathbf{0}, \mathbf{s}_{-i})$ for every outcome $\mathbf{s} \in \setbf{S} = \setbf{X}$. By using DR-submodularity of $\gamma$, we can affirm the following. 

\textbf{Fact 1.}
$\hat{\set{G}}$ is a valid utility game with continuous strategies. 
\vspace{-0.5em}

Assume there exists $\mathbf{x}_{max} \in \reals{n}_+$ such that $ \forall \mathbf{x},\mathbf{x}' \in \setbf{X} , \, \mathbf{x} +  \mathbf{x}' \leq \mathbf{x}_{max}$. Then, we denote with $\alpha(\setbf{\tilde{X}})$ the curvature of $\gamma$ with respect to $\setbf{\tilde{X}} :=\{ \mathbf{x} \in \reals{n}_+ \mid \mathbf{0} \leq \mathbf{x} \leq  \mathbf{x}_{max} \}$ and let $\alpha \in [0,1]$ be an upper bound for $\alpha(\setbf{\tilde{X}})$. If such $\mathbf{x}_{max}$ does not exist, we let $\alpha = 1$.
Moreover, assume that for each player $i \in [N]$ there exists a no-regret algorithm \cite[Sec. 3]{monnot2017} to play $\hat{\set{G}}$. That is, when $\hat{\set{G}}$ is repeated over time, player $i$ can ensure that $\frac{1}{T}\max_{s \in \hat{\set{S}}_i} \sum_{t=1}^T \hat{\pi_i}(s, \mathbf{s}_{-i}^t) - \frac{1}{T} \sum_{t=1}^T \hat{\pi_i}(s_i^t, \mathbf{s}_{-i}^t) \rightarrow 0$ as $T \rightarrow \infty$, for any sequence $\{ \mathbf{s}_{-i}^t \}_{t=1}^T$.
We let \textsc{D-noRegret} be the distributed algorithm where such no-regret algorithms are simultaneously implemented for each player. 
We can establish the following corollary of \refthm{thm:main_theorem} \footnote{A similar version of the corollary can be obtained when no-$\alpha$-regret \cite[Definition 4]{kakade2007} algorithms exist for each player, such as the ones by \cite{chen2018a,chen2018b} for online submodular maximization.}

\begin{corollary}\label{cor:D_no_regret}
Let $\mathbf{x}^\star = \argmax_{\mathbf{x}\in \setbf{X}}\gamma(\mathbf{x})$. Then,\\\textsc{D-noRegret} converges to a distribution $\sigma$ over $\setbf{X}$ such that $\mathbb{E}_{\mathbf{x} \sim \sigma} [ \gamma(\mathbf{x})] \geq 1/( 1 + \alpha)\gamma(\mathbf{x}^\star)$.
\end{corollary}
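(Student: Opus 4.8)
The plan is to read the bound directly off \refthm{thm:main_theorem} applied to $\hat{\set{G}}$, combined with the standard fact that simultaneous no-regret play converges to the set of coarse correlated equilibria.

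First I would record that, by \textbf{Fact 1}, $\hat{\set{G}}$ is a valid utility game with continuous strategies whose joint strategy space is $\setbf{S} = \setbf{X}$. Consequently the set $\setbf{\tilde{S}}$ entering \refthm{thm:main_theorem} equals $\setbf{\tilde{X}}$, so the curvature relevant to the theorem is $\alpha(\setbf{\tilde{X}})$, which is at most $\alpha$ by assumption (and $\alpha = 1$ is always an admissible upper bound, by Remark~\ref{remark:curvature_in_zero_one}, when no $\mathbf{x}_{max}$ exists). Hence \refthm{thm:main_theorem} gives $PoA_{CCE}(\hat{\set{G}}) \leq 1 + \alpha$, i.e. every CCE $\sigma$ of $\hat{\set{G}}$ satisfies $\mathbb{E}_{\mathbf{x}\sim\sigma}[\gamma(\mathbf{x})] \geq \tfrac{1}{1+\alpha}\max_{\mathbf{x}\in\setbf{X}}\gamma(\mathbf{x}) = \tfrac{1}{1+\alpha}\gamma(\mathbf{x}^\star)$, using $\gamma \geq 0$ and $\gamma(\mathbf{0}) = 0$.

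Next I would invoke the classical reduction of \cite{blum2008,roughgarden2015}, whose continuous-game version is implicit in the no-regret analyses of \cite{stolz2007,balduzzi2018,monnot2017}: if \textsc{D-noRegret} is run on $\hat{\set{G}}$ for $T$ rounds with outcomes $\mathbf{s}^1,\dots,\mathbf{s}^T$, then the empirical distribution $\sigma_T := \tfrac{1}{T}\sum_{t=1}^T \delta_{\mathbf{s}^t}$ is an $\epsilon_T$-CCE, i.e. $\mathbb{E}_{\mathbf{s}\sim\sigma_T}[\hat{\pi}_i(\mathbf{s})] \geq \mathbb{E}_{\mathbf{s}\sim\sigma_T}[\hat{\pi}_i(\mathbf{s}_i',\mathbf{s}_{-i})] - \epsilon_T$ for every player $i$ and every deviation $\mathbf{s}_i'\in\hat{\set{S}}_i$, where $\epsilon_T$ is the maximum time-averaged regret and hence $\epsilon_T \to 0$. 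Since $\setbf{X}$ is compact, $\{\sigma_T\}$ has weak-$*$ limit points, and each such limit point $\sigma$ is an exact CCE of $\hat{\set{G}}$; this is the distribution to which \textsc{D-noRegret} converges. Applying the first step to $\sigma$ then yields $\mathbb{E}_{\mathbf{x}\sim\sigma}[\gamma(\mathbf{x})] \geq \tfrac{1}{1+\alpha}\gamma(\mathbf{x}^\star)$. Alternatively, one can avoid the limit altogether by noting that the argument behind \refthm{thm:main_theorem} (see \refsec{sec:proof_of_thm1}) is robust to approximate equilibria: summing the $\epsilon_T$-CCE inequalities over the $N$ players yields $\mathbb{E}_{\mathbf{x}\sim\sigma_T}[\gamma(\mathbf{x})] \geq \tfrac{1}{1+\alpha}\gamma(\mathbf{x}^\star) - \tfrac{N}{1+\alpha}\epsilon_T$, which implies the claim as $T\to\infty$.

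I expect the main obstacle to be the second step rather than the first: making the no-regret-to-CCE reduction fully rigorous over continuous action sets, in particular ensuring existence of weak-$*$ limit points (which requires $\setbf{X}$ compact, guaranteed by the box/budget constraints in the motivating examples) and checking that the defining CCE inequalities, which range over the uncountable deviation sets $\hat{\set{S}}_i$, pass to the limit. The remainder is bookkeeping on top of \textbf{Fact 1} and \refthm{thm:main_theorem}.
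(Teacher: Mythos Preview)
Your proposal is correct and follows essentially the same approach as the paper: apply \refthm{thm:main_theorem} to $\hat{\set{G}}$ via \textbf{Fact 1} (noting $\setbf{\tilde{S}}=\setbf{\tilde{X}}$ so the relevant curvature is bounded by $\alpha$), and then invoke the standard fact that simultaneous no-regret play converges to a CCE. The paper dispatches the second step in one line by citation, whereas you spell out the $\epsilon_T$-CCE/empirical-distribution argument and its robustness variant; this extra care is not needed to match the paper but is a sound elaboration.
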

\vspace{-0.5em}

Note that the \textsc{Frank-Wolfe} variant of \cite{bian2017a} can also be used to maximize $\gamma$ with $(1-e^{-1})$ approximations, under the additional assumption that $\setbf{X}$ is down-closed. For small $\alpha$'s, however, our guarantee can strictly improve the one by \cite{bian2017a}. 

If $\setbf{X}$ is convex compact and $\gamma$ is concave in each $\set{X}_i$, then $\hat{\pi}_i$'s are concave in each $x_i$ and the online gradient ascent algorithm by \cite{zinkevich2003} ensures no-regret for each player \cite{flaxman2005}. Using \refcor{cor:D_no_regret}, we show that the sensor coverage problem of \refexa{example_2} falls into this class and \textsc{D-noRegret} has approximation guarantees that depend on the sensing probabilities $P(r, \cdot)$'s.  
\begingroup
\setlength{\abovedisplayskip}{3pt}
\setlength{\belowdisplayskip}{3pt}
\begin{proposition}\label{prop:Example_2}
Consider the sensor coverage problem of \refexa{example_2} and assume we set-up the game $\hat{\set{G}}$. Then, online gradient ascent \cite{zinkevich2003} is a no-regret algorithm for each player. Moreover, \textsc{D-noRegret} has an expected approximation ratio of $1/(1+\alpha)$, where $\alpha := \max_{r \in [d]}P(r, 2\bar{\mathbf{x}})$ and $\bar{\mathbf{x}} = (\bar{\mathbf{x}}_1,\ldots, \bar{\mathbf{x}}_N)$.
\end{proposition}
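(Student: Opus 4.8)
The statement has two parts: that online gradient ascent is a no-regret algorithm for each player of $\hat{\set{G}}$, and that the curvature $\alpha(\setbf{\tilde{X}})$ equals $\max_{r\in[d]}P(r,2\bar{\mathbf{x}})$, so that \refcor{cor:D_no_regret} applies with this $\alpha$. The plan is to derive both from a single gradient/Hessian computation for $\gamma$. Writing $q_i^r := -\ln(1-p_i^r)\ge 0$ (assume $p_i^r<1$; the boundary case follows by continuity), one has $P(r,\mathbf{x}) = 1-\exp(-\sum_{i\in[N]}q_i^r[\mathbf{x}_i]_r)$ and $\gamma(\mathbf{x}) = \sum_{r\in[d]}w_r\big(1-\exp(-\sum_{i}q_i^r[\mathbf{x}_i]_r)\big)$, so that for the coordinate of player $j$ at location $s$,
\begin{align*}
\frac{\partial\gamma(\mathbf{x})}{\partial[\mathbf{x}_j]_s} &= w_s\, q_j^s\,\big(1-P(s,\mathbf{x})\big)\ge 0, \\
\frac{\partial^2\gamma(\mathbf{x})}{\partial[\mathbf{x}_j]_s\,\partial[\mathbf{x}_k]_{s'}} &= -\,\delta_{s s'}\, w_s\, q_j^s\, q_k^s\,\big(1-P(s,\mathbf{x})\big)\le 0 .
\end{align*}
In particular $\gamma$ is monotone and all Hessian entries are $\le 0$, i.e.\ $\gamma$ is monotone DR-submodular, so \textbf{Fact~1} indeed gives that $\hat{\set{G}}$ is a valid utility game with continuous strategies; and, since the cross-location second derivatives vanish, the Hessian block of $\gamma$ in the $d$ variables $\{[\mathbf{x}_i]_r\}_{r\in[d]}$ of any single player $i$ is diagonal with nonpositive diagonal, hence $\gamma$ --- and therefore $\hat{\pi}_i(\mathbf{x}) = \gamma(\mathbf{x})-\gamma(\mathbf{0},\mathbf{s}_{-i})$, whose subtracted term does not involve $\mathbf{x}_i$ --- is concave in each $\mathbf{x}_i$.

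\textbf{No-regret.} For each $i$, the map $\mathbf{x}_i\mapsto\hat{\pi}_i(\mathbf{x}_i,\mathbf{s}_{-i})$ is concave on the convex compact polytope $\set{X}_i = \{\mathbf{x}_i:\mathbf{c}_i^\top\mathbf{x}_i\le b_i,\ \mathbf{0}\le\mathbf{x}_i\le\bar{\mathbf{x}}_i\}$ with gradients bounded uniformly over $\set{X}_i$ and over the opponents' strategies (a smooth function on a compact domain), so I would invoke the regret analysis of online gradient ascent \cite{zinkevich2003}: against any sequence $\{\mathbf{s}_{-i}^t\}_{t=1}^T$ it incurs regret $O(\sqrt{T})$, hence the time-averaged regret vanishes, which is exactly the no-regret property required to run \textsc{D-noRegret}.

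\textbf{Curvature and conclusion.} Using the differentiable form $\alpha(\setbf{Z}) = 1 - \inf_{\mathbf{x}\in\setbf{Z},\,i}[\nabla\gamma(\mathbf{x})]_i/[\nabla\gamma(\mathbf{0})]_i$ together with $P(r,\mathbf{0}) = 0$, the identity above gives $[\nabla\gamma(\mathbf{x})]_{(j,s)}/[\nabla\gamma(\mathbf{0})]_{(j,s)} = 1-P(s,\mathbf{x})$ for every coordinate $(j,s)$ with $w_s q_j^s>0$ --- coordinates with $w_s q_j^s=0$ being null directions of $\gamma$, which may be discarded. Hence $\alpha(\setbf{\tilde{X}}) = 1 - \inf_{\mathbf{x}\in\setbf{\tilde{X}},\,r\in[d]}(1-P(r,\mathbf{x})) = \sup_{\mathbf{x}\in\setbf{\tilde{X}},\,r\in[d]}P(r,\mathbf{x})$. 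Since $\mathbf{x}_i\le\bar{\mathbf{x}}_i$ for every $\mathbf{x}\in\setbf{X}$, the choice $\mathbf{x}_{max} = 2\bar{\mathbf{x}}$ is admissible ($\mathbf{x}+\mathbf{x}'\le 2\bar{\mathbf{x}}$ for all $\mathbf{x},\mathbf{x}'\in\setbf{X}$), so $\setbf{\tilde{X}} = \{\mathbf{x}:\mathbf{0}\le\mathbf{x}\le 2\bar{\mathbf{x}}\}$; each $P(r,\cdot)$ is coordinatewise nondecreasing, so its supremum over this box is attained at $2\bar{\mathbf{x}}$, giving $\alpha(\setbf{\tilde{X}}) = \max_{r\in[d]}P(r,2\bar{\mathbf{x}}) =: \alpha$. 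Applying \refcor{cor:D_no_regret} with this $\alpha$ then yields $\mathbb{E}_{\mathbf{x}\sim\sigma}[\gamma(\mathbf{x})]\ge\frac{1}{1+\alpha}\gamma(\mathbf{x}^\star)$, i.e.\ the claimed expected approximation ratio $1/(1+\alpha)$.

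\textbf{Expected main obstacle.} Monotone DR-submodularity, the block-concavity, and the no-regret claim all follow from the single Hessian computation and are routine. The step needing real care is the curvature evaluation: seeing that the ratio of partial derivatives telescopes to $1-P(r,\mathbf{x})$, verifying that $\mathbf{x}_{max} = 2\bar{\mathbf{x}}$ is a legitimate choice so that $\setbf{\tilde{X}}$ is exactly the box $[\mathbf{0},2\bar{\mathbf{x}}]$, and handling degenerate coordinates ($w_r=0$ or $p_i^r\in\{0,1\}$) cleanly.
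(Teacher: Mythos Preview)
Your proposal is correct and follows essentially the same route as the paper's own proof: compute the gradient and Hessian of $\gamma$ explicitly, read off monotonicity and DR-submodularity from the signs, observe that the per-player Hessian blocks are diagonal nonpositive to obtain concavity of $\hat{\pi}_i$ in $\mathbf{x}_i$ (hence online gradient ascent is no-regret), take $\mathbf{x}_{max}=2\bar{\mathbf{x}}$, and evaluate the differentiable curvature formula to get $\alpha(\setbf{\tilde{X}})=\max_{r\in[d]}P(r,2\bar{\mathbf{x}})$ before invoking \refcor{cor:D_no_regret}. Your treatment is in fact slightly more detailed than the paper's (you justify why the infimum of $1-P(r,\mathbf{x})$ over the box is attained at $2\bar{\mathbf{x}}$ and you flag the degenerate coordinates $w_s q_j^s=0$), but the argument is the same.
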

\vspace{-0.5em}

Note that the obtained approximation ratio is strictly larger than $\frac{1}{2}$ and it increases when the number of sensors $N$ or the detection probabilities decrease, a fact also noted in \cite{marden2013} for the binary setting. We compare the performance of \textsc{D-noRegret} and the \textsc{Frank-Wolfe} variant of \cite{bian2017a} in \refsec{sec:examples}.
\endgroup

A decentralized maximization scheme for submodular functions is also proposed in \cite{mokhtari2018}, albeit in a different setting. In \cite{mokhtari2018}, $\gamma$ consists of a sum of \emph{local} functions subject to a common down-closed convex contraint set, while we considered a generic objective $\gamma$ subject to local constraints.



\vspace{-.5em}
\section{Analysis} \label{sec:background}
\vspace{-.5em}

In order to prove \refthm{thm:main_theorem} and its extension to non-submodular functions (\refsec{sec:extension}), we first review the main properties of submodularity in continuous domains. We will introduce two straightforward interpretations of the DR properties defined in the literature, and show a fundamental property of the curvature of a monotone DR-submodular function. 

\vspace{-0.5em}
\subsection{Submodularity and curvature on continuous domains}
\vspace{-0.5em}
Submodularity in continuous domains has received recent attention for approximate maximization and minimization of non convex/non concave functions \cite{bian2017a,hassani2017,bach2015}.
Submodular continuous functions are defined on subsets of $\reals{n}$ of the form $\setbf{X} = \prod_{i=1}^n \set{X}_i$, where each $\set{X}_i$ is a compact subset of $\reals{}$. 
A function $f: \setbf{X} \rightarrow \reals{}$ is \emph{submodular} if
for all $\mathbf{x} \in \setbf{X}$, $\forall i, j$ and $a_i,a_j >0$ s.t. $x_i + a_i \in \mathcal{X}_i$, $x_j + a_j \in \mathcal{X}_j$, \cite{bach2015}
\begingroup
\setlength{\abovedisplayskip}{5pt}
\setlength{\belowdisplayskip}{5pt}
\begin{equation*}
f(\mathbf{x} + a_i \mathbf{e}_i ) - f(\mathbf{x})  \geq    f(\mathbf{x} + a_i \mathbf{e}_i+  a_j \mathbf{e}_j) - f(\mathbf{x} + a_j \mathbf{e}_j) \,.
\end{equation*}
\endgroup
The above property also includes submodularity of set functions, by restricting $\set{X}_i$'s to $\{0,1\}$, and over integer lattices, by restricting $\set{X}_i$'s to $\mathbb{Z}_+$. We are interested, however, in submodular continuous functions, where $\set{X}_i$'s are compact subsets of $\reals{}$.
As thoroughly studied for set functions, submodularity is related to diminishing return properties of $f$. However, differences exist when considering functions over continuous (or integer) domains. 
In particular, submodularity is equivalent to the following \emph{weak} DR property \cite{bian2017a}.

\begin{definition}[\texttt{weak DR} property]\label{def:weak_DR_property}
  A function $f: \setbf{X}\subseteq \reals{n}  \rightarrow \mathbb{R}$ is \emph{weakly DR-submodular} if, for all $\mathbf{x} \leq \mathbf{y} \in \setbf{X}$, $\forall i$ s.t. $x_i = y_i$, $\forall k \in \mathbb{R}_+$ s.t. $(\mathbf{x} + k \mathbf{e}_i)$ and $(\mathbf{y} + k \mathbf{e}_i)$ are in $ \setbf{X}$, 
\begingroup
\setlength{\abovedisplayskip}{5pt}
\setlength{\belowdisplayskip}{5pt}\begin{equation*}
f(\mathbf{x} + k \mathbf{e}_i ) - f(\mathbf{x})  \geq    f(\mathbf{y} + k \mathbf{e}_i) - f(\mathbf{y}) \,.
\end{equation*}
\end{definition} 
\endgroup
The \texttt{DR} property, which we defined in \refdef{def:DR_property} characterizes the full notion of diminishing returns and indentifies a subclass of submodular continuous functions. While \texttt{weak DR} and \texttt{DR} properties coincide for set functions, this is not the case for functions on integer or continuous lattices. As the next section reveals, the \texttt{weak DR} property of $\gamma$ is indeed not sufficient to prove \refthm{thm:main_theorem}. However, it will be useful in \refsec{sec:extension} when we extend our results to non-submodular functions.  In \refapp{app:properties_submodular} we discuss submodularity for differentiable functions.

To prove the main results of the paper, the following two propositions provide equivalent characterizations of \texttt{weak DR} and \texttt{DR} properties, respectively\footnote{The introduced properties are the continuous versions of the `group DR property'\cite{bilmes_slides2011} of submodular set functions.}.

\begingroup
\setlength{\abovedisplayskip}{5pt}
\setlength{\belowdisplayskip}{5pt}
\begin{proposition}\label{prop:group_weak_DR_property}
  A function $f: \setbf{X}\subseteq \reals{n}  \rightarrow \mathbb{R}$ is \emph{weakly DR-submodular} (\refdef{def:weak_DR_property}) if and only if for all $\mathbf{x} \leq \mathbf{y} \in \setbf{X}$, $\forall \mathbf{z} \in \reals{n}_+$ s.t. $(\mathbf{x} + \mathbf{z})$ and $(\mathbf{y} + \mathbf{z})$ are in $\setbf{X}$, with $z_i = 0$ $\forall i \in [n] : y_i > x_i $, 
  \vspace{-.5em}
\begin{equation*}
f(\mathbf{x} + \mathbf{z} ) - f(\mathbf{x})  \geq    f(\mathbf{y} +\mathbf{z}) - f(\mathbf{y}) \,.
\end{equation*}
\end{proposition}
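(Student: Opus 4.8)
The plan is to prove both directions of the equivalence, with the nontrivial direction being to show that the single-coordinate \texttt{weak DR} property (Definition \ref{def:weak_DR_property}) implies the group version. The reverse implication is immediate: given the group property, one simply takes $\mathbf{z} = k\mathbf{e}_i$ for a coordinate $i$ with $x_i = y_i$; since the hypothesis on $\mathbf{z}$ requires only that $z_j = 0$ for those $j$ with $y_j > x_j$, and the single nonzero coordinate of $k\mathbf{e}_i$ satisfies $x_i = y_i$, this choice is admissible and the group inequality reduces to exactly the inequality in Definition \ref{def:weak_DR_property}.

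For the forward direction, the idea is a telescoping argument over the coordinates along which $\mathbf{z}$ is supported. Let $I = \{i_1, \ldots, i_m\} = \{i : z_i > 0\}$; by hypothesis every such coordinate satisfies $x_{i} = y_{i}$. I would add the increments $z_{i_1}\mathbf{e}_{i_1}, \ldots, z_{i_m}\mathbf{e}_{i_m}$ one at a time to both $\mathbf{x}$ and $\mathbf{y}$, writing the difference $f(\mathbf{x}+\mathbf{z}) - f(\mathbf{x})$ as the telescoping sum $\sum_{\ell=1}^m \big( f(\mathbf{x} + \sum_{r\le \ell} z_{i_r}\mathbf{e}_{i_r}) - f(\mathbf{x} + \sum_{r < \ell} z_{i_r}\mathbf{e}_{i_r}) \big)$, and similarly for $\mathbf{y}$. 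For each $\ell$, I apply Definition \ref{def:weak_DR_property} with the coordinate $i = i_\ell$, increment $k = z_{i_\ell}$, and the pair of base points $\mathbf{x}^{(\ell-1)} := \mathbf{x} + \sum_{r<\ell} z_{i_r}\mathbf{e}_{i_r}$ and $\mathbf{y}^{(\ell-1)} := \mathbf{y} + \sum_{r<\ell} z_{i_r}\mathbf{e}_{i_r}$. Summing the resulting $m$ inequalities gives the claim.

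The main obstacle is verifying that Definition \ref{def:weak_DR_property} is genuinely applicable at each step, which requires checking three things: (a) $\mathbf{x}^{(\ell-1)} \le \mathbf{y}^{(\ell-1)}$ — this holds since we added the same vector to both sides of $\mathbf{x} \le \mathbf{y}$; (b) the coordinate $i_\ell$ agrees on the two base points, i.e. $[\mathbf{x}^{(\ell-1)}]_{i_\ell} = [\mathbf{y}^{(\ell-1)}]_{i_\ell}$ — this holds because $i_\ell \in I$ implies $x_{i_\ell} = y_{i_\ell}$, and the partial sums $\sum_{r<\ell} z_{i_r}\mathbf{e}_{i_r}$ add the \emph{same} amount to coordinate $i_\ell$ of both (namely $0$ if $i_\ell \notin \{i_1,\ldots,i_{\ell-1}\}$, which holds as the $i_r$ are distinct); and (c) all the intermediate points lie in $\setbf{X}$. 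For (c), note $\setbf{X} = \prod_j \set{X}_j$ is a product of intervals, and each intermediate point has every coordinate lying between the corresponding coordinates of $\mathbf{x} \in \setbf{X}$ (or $\mathbf{y} \in \setbf{X}$) and of $\mathbf{x}+\mathbf{z} \in \setbf{X}$ (or $\mathbf{y}+\mathbf{z} \in \setbf{X}$), hence in $\set{X}_j$ by convexity of each factor. Assembling these checks completes the proof.
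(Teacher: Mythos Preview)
Your proposal is correct and follows essentially the same telescoping argument as the paper: decompose $f(\mathbf{x}+\mathbf{z})-f(\mathbf{x})$ coordinate by coordinate along the support of $\mathbf{z}$, apply the single-coordinate \texttt{weak DR} inequality at each step, and sum. The paper telescopes over all $n$ coordinates using the notation $[\mathbf{z}]_1^i$ and then restricts the sum to indices with $x_i=y_i$, whereas you iterate directly over $I=\{i:z_i>0\}$; these are the same idea, and your explicit checks (a)--(c) of the hypotheses at each step (in particular the feasibility of the intermediate points, which the paper leaves implicit) are a welcome addition.
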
 
  \vspace{-.5em}

\begin{proposition}\label{prop:group_DR_property}
  A function $f: \setbf{X} \subseteq \reals{n} \rightarrow \mathbb{R}$ is \emph{DR-submodular} (\refdef{def:DR_property}) if and only if for all $\mathbf{x} \leq \mathbf{y} \in \setbf{X}$, $\forall \mathbf{z} \in \reals{n}_+$ s.t. $(\mathbf{x} + \mathbf{z})$ and $(\mathbf{y} + \mathbf{z})$ are in $\setbf{X}$, 
  \vspace{-.5em}
\begin{equation*}
f(\mathbf{x} + \mathbf{z} ) - f(\mathbf{x})  \geq    f(\mathbf{y} +\mathbf{z}) - f(\mathbf{y}) \,.
\end{equation*}
\end{proposition}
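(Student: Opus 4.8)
The plan is to prove the two directions separately; only the forward implication is nontrivial, and even that one reduces to a short telescoping argument over coordinates. The ``if'' direction is immediate: assuming the displayed group inequality holds for every admissible pair $\mathbf{x}\le\mathbf{y}$ and every $\mathbf{z}\in\reals{n}_+$ with $(\mathbf{x}+\mathbf{z}),(\mathbf{y}+\mathbf{z})\in\setbf{X}$, specialize to $\mathbf{z}=k\mathbf{e}_i$ to recover Definition \ref{def:DR_property} verbatim.

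For the ``only if'' direction, I would decompose the common increment $\mathbf{z}$ one coordinate at a time and invoke the single-coordinate \texttt{DR} property $n$ times. Using the notation $[\cdot]_1^j$, set $\mathbf{z}^{(j)} := [\mathbf{z}]_1^j$ for $j=0,\dots,n$, so that $\mathbf{z}^{(0)}=\mathbf{0}$, $\mathbf{z}^{(n)}=\mathbf{z}$, and $\mathbf{z}^{(j)} = \mathbf{z}^{(j-1)} + z_j\mathbf{e}_j$. Writing both sides of the target inequality as telescoping sums,
\[
  f(\mathbf{x}+\mathbf{z}) - f(\mathbf{x}) \;=\; \textstyle\sum_{j=1}^{n}\big(f(\mathbf{x}+\mathbf{z}^{(j)}) - f(\mathbf{x}+\mathbf{z}^{(j-1)})\big),
\]
and analogously for $\mathbf{y}$, it suffices to establish, for each $j\in[n]$,
\[
  f(\mathbf{x}+\mathbf{z}^{(j-1)}+z_j\mathbf{e}_j) - f(\mathbf{x}+\mathbf{z}^{(j-1)}) \;\ge\; f(\mathbf{y}+\mathbf{z}^{(j-1)}+z_j\mathbf{e}_j) - f(\mathbf{y}+\mathbf{z}^{(j-1)}).
\]
This is precisely Definition \ref{def:DR_property} applied at the pair $\mathbf{x}+\mathbf{z}^{(j-1)}\le\mathbf{y}+\mathbf{z}^{(j-1)}$ (the ordering $\mathbf{x}\le\mathbf{y}$ being preserved under adding the common vector $\mathbf{z}^{(j-1)}$), with coordinate $i=j$ and step $k=z_j\ge 0$; when $z_j=0$ the step is a trivial equality.

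The only point requiring care -- and the step I would treat most carefully -- is feasibility: I must verify that every intermediate point $\mathbf{x}+\mathbf{z}^{(j)}$ and $\mathbf{y}+\mathbf{z}^{(j)}$ lies in $\setbf{X}$, so that the telescoping sums are well-defined and Definition \ref{def:DR_property} is applicable at each stage. This is exactly where the product structure $\setbf{X}=\prod_{i=1}^{n}\set{X}_i$ is used: the $i$-th coordinate of $\mathbf{x}+\mathbf{z}^{(j)}$ equals $x_i+z_i\in\set{X}_i$ for $i\le j$ (because $\mathbf{x}+\mathbf{z}\in\setbf{X}$) and equals $x_i\in\set{X}_i$ for $i>j$ (because $\mathbf{x}\in\setbf{X}$), and symmetrically for $\mathbf{y}$; hence each intermediate point is feasible and no further obstruction arises. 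The companion statement, Proposition \ref{prop:group_weak_DR_property}, follows by the identical telescoping, with the extra hypothesis that $z_i=0$ whenever $y_i>x_i$ being exactly what guarantees that at every step $j$ with $z_j>0$ the two base points share their $j$-th coordinate, as needed to invoke Definition \ref{def:weak_DR_property}.
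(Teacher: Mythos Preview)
Your proof is correct and follows essentially the same approach as the paper: specialize $\mathbf{z}=k\mathbf{e}_i$ for the backward direction, and for the forward direction telescope over the partial increments $[\mathbf{z}]_1^{j}$ and apply the single-coordinate \texttt{DR} inequality termwise. Your explicit feasibility check via the product structure $\setbf{X}=\prod_i\set{X}_i$ is a welcome addition that the paper's proof leaves implicit.
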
 
\endgroup

Finally, the following proposition clarifies the role of the curvature of a DR-submodular function and is key for the proof of \refthm{thm:main_theorem}.

\begingroup
\setlength{\abovedisplayskip}{5pt}
\setlength{\belowdisplayskip}{5pt}
\begin{proposition}\label{prop:equivalence_curvature}
Consider a monotone DR-submodular function $f: \setbf{X} \subseteq \reals{n}_+  \rightarrow \mathbb{R}$, and a set $\setbf{Z}:=\{\mathbf{x} \in \reals{n} : \mathbf{0} \leq \mathbf{x} \leq \mathbf{z}_{max} \} \subseteq \setbf{X}$. Then, for any $\mathbf{x}, \mathbf{y} \in \setbf{Z}$ such that $\mathbf{x} + \mathbf{y} \in \setbf{Z}$, 
 \begin{equation*}
 f(\mathbf{x} + \mathbf{y}) - f(\mathbf{x}) \geq (1-\alpha( \setbf{Z}))[f(\mathbf{y}) - f(\mathbf{0})] \, ,
 \end{equation*}
 where $\alpha( \setbf{Z})$ is the curvature of $f$ with respect to $\setbf{Z}$.
\end{proposition}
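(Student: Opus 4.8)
The plan is to telescope both sides of the claimed inequality coordinate by coordinate and then convert the infinitesimal information encoded in the curvature into a bound on finite increments. Set $\mathbf{w}_i := \mathbf{x} + [\mathbf{x}]_1^{0}$... more precisely, write $\mathbf{w}_i := \mathbf{x} + [\mathbf{y}]_1^{i-1}$, so that $\mathbf{w}_1 = \mathbf{x}$ and $\mathbf{w}_i + y_i\mathbf{e}_i = \mathbf{x} + [\mathbf{y}]_1^i$; since $\setbf{Z}$ is a box and $\mathbf{0} \le \mathbf{w}_i \le \mathbf{w}_i + t\mathbf{e}_i \le \mathbf{x}+\mathbf{y} \le \mathbf{z}_{max}$ for all $t\in[0,y_i]$, every intermediate point stays in $\setbf{Z}$, and
\[
f(\mathbf{x}+\mathbf{y}) - f(\mathbf{x}) = \sum\nolimits_{i=1}^n \big( f(\mathbf{w}_i + y_i\mathbf{e}_i) - f(\mathbf{w}_i)\big).
\]

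The heart of the argument is the per-coordinate estimate $f(\mathbf{w}_i + y_i\mathbf{e}_i) - f(\mathbf{w}_i) \ge (1-\alpha(\setbf{Z}))\,(f(y_i\mathbf{e}_i) - f(\mathbf{0}))$. Fix $i$ and consider the one-dimensional sections $\psi(t) := f(\mathbf{w}_i + t\mathbf{e}_i)$ and $\phi(t) := f(t\mathbf{e}_i)$ on $[0,y_i]$. Comparing increments along a single axis in \refdef{def:DR_property} shows that a DR-submodular $f$ is concave along each coordinate direction, so $\psi,\phi$ are concave and, by monotonicity of $f$, nondecreasing; let $g_i := \phi'_+(0) = \lim_{k\to 0^+}(f(k\mathbf{e}_i)-f(\mathbf{0}))/k$. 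Concavity of $\phi$ gives the upper bound $f(y_i\mathbf{e}_i) - f(\mathbf{0}) = \phi(y_i)-\phi(0) \le g_i\,y_i$. On the other hand, for every $t\in[0,y_i)$ the point $\mathbf{w}_i + t\mathbf{e}_i$ lies in $\setbf{Z}$, so the definition of $\alpha(\setbf{Z})$ applied at that point (in coordinate $i$) gives $\psi'_+(t) \ge (1-\alpha(\setbf{Z}))\,g_i$; integrating this pointwise bound over $[0,y_i]$ (legitimate because the concave nondecreasing function $\psi$ is locally Lipschitz on the interior and continuous on $[0,y_i]$) yields $f(\mathbf{w}_i + y_i\mathbf{e}_i) - f(\mathbf{w}_i) = \psi(y_i)-\psi(0) \ge (1-\alpha(\setbf{Z}))\,g_i\,y_i$. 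Since $1-\alpha(\setbf{Z})\ge 0$ by \mbox{Remark~\ref{remark:curvature_in_zero_one}}, combining the two displays proves the per-coordinate claim. (The degenerate cases $g_i=0$ and $g_i=+\infty$ are trivial: in the first the right-hand side is $0$, and in the second the curvature equals $1$, so in both the assertion reduces to monotonicity of $f$.)

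Summing the per-coordinate estimate over $i$ gives
\[
f(\mathbf{x}+\mathbf{y}) - f(\mathbf{x}) \;\ge\; (1-\alpha(\setbf{Z}))\sum\nolimits_{i=1}^n \big(f(y_i\mathbf{e}_i) - f(\mathbf{0})\big),
\]
and it remains to note the coordinatewise subadditivity $\sum_i (f(y_i\mathbf{e}_i)-f(\mathbf{0})) \ge f(\mathbf{y}) - f(\mathbf{0})$: telescoping $f(\mathbf{y})-f(\mathbf{0}) = \sum_i (f([\mathbf{y}]_1^i) - f([\mathbf{y}]_1^{i-1}))$ and applying \refdef{def:DR_property} with $\mathbf{0} \le [\mathbf{y}]_1^{i-1}$ and increment $y_i\mathbf{e}_i$ gives $f([\mathbf{y}]_1^i) - f([\mathbf{y}]_1^{i-1}) \le f(y_i\mathbf{e}_i) - f(\mathbf{0})$. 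Using once more that $1-\alpha(\setbf{Z})\ge 0$ and that $f(\mathbf{y})-f(\mathbf{0})\ge 0$ by monotonicity, the claimed inequality follows.

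The main obstacle is the middle step, namely passing from the purely infinitesimal information in the definition of curvature — a comparison of one-sided directional derivatives at points of $\setbf{Z}$ with those at the origin — to a bound on the finite increments $f(\mathbf{w}_i + y_i\mathbf{e}_i) - f(\mathbf{w}_i)$. What makes this work is the structural fact that a monotone DR-submodular function is concave, hence well behaved, along each coordinate axis, which justifies integrating the pointwise derivative bound; a reader who prefers to avoid the mild measure-theoretic bookkeeping may simply add the standing assumption that $f$ is differentiable (cf. \refapp{app:properties_submodular}), in which case $\psi(y_i)-\psi(0)=\int_0^{y_i}\psi'(t)\,dt$ is the ordinary fundamental theorem of calculus.
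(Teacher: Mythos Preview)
Your proof is correct and follows the same skeleton as the paper's: telescope $f(\mathbf{x}+\mathbf{y})-f(\mathbf{x})$ along coordinates, establish the per-coordinate bound $f(\mathbf{w}_i+y_i\mathbf{e}_i)-f(\mathbf{w}_i)\ge(1-\alpha(\setbf{Z}))\big(f(y_i\mathbf{e}_i)-f(\mathbf{0})\big)$, then sum and use (weak) DR-submodularity to collapse $\sum_i\big(f(y_i\mathbf{e}_i)-f(\mathbf{0})\big)\ge f(\mathbf{y})-f(\mathbf{0})$. The only substantive difference is in how the per-coordinate bound --- the passage from the infinitesimal curvature definition to a finite increment --- is obtained. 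The paper shows that the quantity $1-\inf_{\mathbf{x}}\frac{f(\mathbf{x}+k\mathbf{e}_i)-f(\mathbf{x})}{f(k\mathbf{e}_i)-f(\mathbf{0})}$ is non-increasing in $k$: it first uses DR-submodularity to identify the minimizing $\mathbf{x}$ as $\mathbf{z}_{max}-k\mathbf{e}_i$, and then invokes the monotonicity of secant slopes of the concave sections $t\mapsto f(\mathbf{z}_{max}-t\mathbf{e}_i)$ and $t\mapsto f(t\mathbf{e}_i)$. You instead integrate the pointwise derivative inequality $\psi'_+(t)\ge(1-\alpha(\setbf{Z}))\,g_i$ over $[0,y_i]$ and pair it with the concavity bound $\phi(y_i)-\phi(0)\le g_i\,y_i$. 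Your route is a bit more direct (no need to locate the extremal $\mathbf{x}$), while the paper's yields the slightly stronger intermediate statement that the curvature inequality holds at \emph{every} finite scale $k$, not only after integration.
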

\endgroup
%
%
%
\vspace{-0.5em}
\subsection{Proof of Theorem 1}\label{sec:proof_of_thm1}
\vspace{-0.5em}


The proof uses submodularity of the social function similarly to \cite[Example 2.6]{roughgarden2015} and  \cite[Proposition 4]{maehara2015}. However, it allows us to consider the curvature of $\gamma$. Differently from  \cite[Proposition 4]{maehara2015}, our proof does not rely on the structure of the strategy sets $\mathcal{S}_i$'s. The \texttt{weak DR} and \texttt{DR} properties are used separately in the proof, to show that the \texttt{weak DR} property of $\gamma$ is not sufficient to obtain the results. This fact was similarly noted in \cite{maehara2015} for the integer case. 

To upper bound $PoA_{CCE}$, we first prove that for any pair of outcomes $\mathbf{s}, \mathbf{s}^\star \in \setbf{S}$,
\begingroup
\setlength{\abovedisplayskip}{6pt}
\setlength{\belowdisplayskip}{6pt}
\begin{equation*}
\sum\nolimits_{i=1}^N \pi_i(\mathbf{s}_i^\star, \mathbf{s}_{-i}) \geq \gamma( \mathbf{s}^\star ) -  \alpha \: \gamma(\mathbf{s}) \,.
\end{equation*}
\endgroup
In the framework of \cite{roughgarden2015}, this means that $\set{G}$ is a \emph{($ 1,  \alpha$)-smooth} game. Then, few inequalities from \cite{roughgarden2015} show that $PoA_{CCE} \leq (1 + \alpha)$. 

The smoothness proof is obtained as follows.
Consider any pair of outcomes $\mathbf{s},\mathbf{s}^\star \in \setbf{S}$. For $i \in \{0,\ldots, N\}$ with a slight abuse of notation we define $[ \mathbf{s}^\star]_1^i = (\mathbf{s}^\star_1, \ldots, \mathbf{s}^\star_i, \mathbf{0}, \ldots, \mathbf{0})$ with $[\mathbf{s}^\star ]_1^0 = \mathbf{0}$, where $\mathbf{s}^\star_j$ is the strategy of player $j$ in the outcome $\mathbf{s}^\star$. We have: 
\begingroup
\setlength{\abovedisplayskip}{4pt}
\setlength{\belowdisplayskip}{4pt}
\begin{align*}
&\sum\nolimits_{i=1}^N \pi_i(\mathbf{s}_i^\star, \mathbf{s}_{-i}) \geq \sum\nolimits_{i=1}^N \gamma(\mathbf{s}_i^\star, \mathbf{s}_{-i}) - \gamma(\mathbf{0}, \mathbf{s}_{-i})  \\ 
& \geq   \sum\nolimits_{i=1}^N \gamma(\mathbf{s}_i^\star + \mathbf{s}_i, \mathbf{s}_{-i}) - \gamma(\mathbf{s})  \\
& \geq   \sum\nolimits_{i=1}^N \gamma( \mathbf{s} + [ \mathbf{s}^\star ]_1^i) - \gamma(\mathbf{s} + [ \mathbf{s}^\star]_1^{i-1})  \\
& =  \gamma(\mathbf{s} + \mathbf{s}^\star ) - \gamma(\mathbf{s}) \\
& \geq  (1-\alpha)\gamma(\mathbf{s}) + \gamma( \mathbf{s}^\star ) - \gamma(\mathbf{s})   =   \gamma( \mathbf{s}^\star ) -  \alpha \: \gamma(\mathbf{s}) \,. 
\end{align*}
\endgroup
The first inequality follows from condition ii) of valid utility games as per \refdef{def:valid_continuous_game}. The second inequality from $\gamma$ being DR-submodular (and using \refprop{prop:group_DR_property}). The third inequality from $\gamma$ being weakly DR-submodular (and using \refprop{prop:group_weak_DR_property}).
The last inequality follows since, by \refprop{prop:equivalence_curvature},
\begingroup
\setlength{\abovedisplayskip}{6pt}
\setlength{\belowdisplayskip}{6pt}\begin{equation*}
\gamma(\mathbf{s} + \mathbf{s}^\star)  - \gamma( \mathbf{s}^\star ) \geq (1-\alpha(\setbf{\tilde{S}}))[\gamma(\mathbf{s}) - \underbrace{\gamma(\mathbf{0})}_{= 0}] \, ,
\end{equation*} 
\vspace{-0.5em}
and $\alpha(\setbf{\tilde{S}}) \leq \alpha$.
\endgroup

For completeness we report the steps of \cite{roughgarden2015} to prove that $PoA_{CCE} \leq (1 + \alpha)$. Let $\mathbf{s}^\star = \argmax_{\mathbf{s}\in \setbf{S}}\gamma(\mathbf{s})$. Then, for any CCE $\sigma$ of $\set{G}$ we have 
\begingroup
\setlength{\abovedisplayskip}{3pt}
\setlength{\belowdisplayskip}{3pt}
\begin{align*}
\mathbb{E}_{\mathbf{s} \sim \sigma} [ \gamma(\mathbf{s})] & \geq \sum_{i=1}^N \mathbb{E}_{\mathbf{s} \sim \sigma} [ \pi_i(\mathbf{s})]  \geq \sum_{i=1}^N \mathbb{E}_{\mathbf{s} \sim \sigma} [ \pi_i(\mathbf{s}_i^\star, \mathbf{s}_{-i})] \\
& \geq \gamma( \mathbf{s}^\star )  - \alpha \: \mathbb{E}_{\mathbf{s} \sim \sigma} [ \gamma(\mathbf{s})] \, , 
\end{align*}
\endgroup
where the first inequality is due to condition iii) of valid utility games as per \refdef{def:valid_continuous_game}, the second inequality holds from $\sigma$ being a CCE, and the last one since $\set{G}$ is ($ 1,  \alpha$)-smooth. Moreover, linearity of expectation was used throughout. 
From the inequalities above it holds that for any CCE $\sigma$ of $\set{G}$, $\gamma(\mathbf{s}^\star)/ \mathbb{E}_{\mathbf{s} \sim \sigma} [ \gamma(\mathbf{s})] \leq 1 + \alpha$. Hence $PoA_{CCE} \leq 1 + \alpha$.

\begin{remark}Although \refthm{thm:main_theorem} requires DR-submodularity of $\gamma$ over $\reals{Nd}_+$ (for simplicity), only DR-submodularity over $\tilde{\setbf{S}}$ was used. In case $\gamma$ is DR-submodular only over $\setbf{S}$, one could consider  $\tilde{\gamma} : \reals{Nd}_+ \rightarrow \reals{}_+$ defined as $\tilde{\gamma}(\mathbf{s}) = \gamma( \min(\mathbf{s},\mathbf{s}_{max}))$ which is DR-submodular over $\reals{Nd}_+$. This can be proved using DR-submodularity and monotonicity of $\gamma$ over $\tilde{\setbf{S}}$. The same smoothness proof is obtained with $\tilde{\gamma}$ in place of $\gamma$ since the two functions are equal over $\setbf{S}$. However, the curvature of $\tilde{\gamma}$ with respect to $\tilde{\setbf{S}}$ is 1 and therefore a bound of 2 for $PoA_{CCE}$ is obtained.
\end{remark}
\vspace{-.5em}
\section{Extension to the non-submodular case}\label{sec:extension}
\vspace{-.5em}

In many applications \cite{bian2017b}, functions are close to being submodular, where this closedness has been measured in term of \emph{submodularity ratio} \cite{das2011} (for set functions) and \emph{weak-submodularity} \cite{hassani2017} (on continuous domains). Accordingly, in this section we relax condition i) of valid utility games (\refdef{def:valid_continuous_game}) and provide bounds for $PoA_{CCE}$ when the social function $\gamma$ is not necessarily DR-submodular. This case was never considered for the valid utility games of \cite{vetta2002,maehara2015}. 
We relax the \texttt{weak DR} property of $\gamma$ with the following definition. 
\begingroup
\setlength{\abovedisplayskip}{2pt}
\setlength{\belowdisplayskip}{2pt}
\vspace{-.5em}
\begin{definition}\label{def:playerwise-submodularity_ratio}
Given a game $\set{G}=(N, \{ \set{S}_i\}_{i=1}^N,  \{ \pi_i\}_{i=1}^N , \gamma)$ with $\gamma$ monotone,  we define \emph{generalized submodularity ratio} of $ \gamma$ as the largest scalar $\eta$ such that for any pair of outcomes $\mathbf{s}, \mathbf{s}' \in \setbf{S}$,
\begin{equation*}
 \sum\nolimits_{i=1}^N \gamma(\mathbf{s}_i + \mathbf{s}_i', \mathbf{s}_{-i}) - \gamma(\mathbf{s}) \geq \eta \big[ \gamma(\mathbf{s} + \mathbf{s}' ) - \gamma(\mathbf{s})  \big] \,. 
\end{equation*}
\end{definition}
 \endgroup 
It is straightforward to show that $\eta \in [0,1]$. Moreover, as stated in \refapp{B} (\refprop{prop:weak_DR_implies_playerwise_ratio_one}), if $\gamma$ is weakly DR-submodular then $\gamma$ has generalized submodularity ratio $\eta = 1$.
When strategies $\mathbf{s}_i$ are scalar (i.e., $d=1$), \refdef{def:playerwise-submodularity_ratio} generalizes the submodularity ratio by \cite{das2011} to continuous domains\footnote{In \refapp{B} we define an exact generalization of the submodularity ratio by \cite{das2011} to continuous domains. We relate it to \refdef{def:playerwise-submodularity_ratio} and compare it to the ratio by \cite{hassani2017}.}.

In addition, we relax the \texttt{DR} property of $\gamma$ as follows.
\begin{definition}\label{def:playerwise_DR}
Given a game $\set{G}=(N, \{ \set{S}_i\}_{i=1}^N,  \{ \pi_i\}_{i=1}^N , \gamma)$, we say that $ \gamma$ is \emph{playerwise DR-submodular} if  for every player $i$ and vector of strategies $\mathbf{s}_{-i}$, $\gamma(\cdot, \mathbf{s}_{-i})$  is DR-submodular. 
\end{definition}
\vspace{-0.5em}
Analogously to \refdef{def:DR_property}, if $\gamma$ is twice-differentiable, it is playwerwise DR-submodular iff for every $i \in [N]$
\vspace{-0.5em}
\begin{equation*}
\forall \mathbf{s} \in \setbf{S}, \quad \frac{\partial^2 \gamma(\mathbf{s})}{\partial [\mathbf{s}_i]_l \partial [\mathbf{s}_i]_m } \leq 0 ,\quad \forall \: l, m \in [d]  \, .
\end{equation*}
While \refdef{def:playerwise-submodularity_ratio} concerns the interactions among different players, \refdef{def:playerwise_DR} requires that $\gamma$ is DR-submodular with respect to each individual player.
When the social function $\gamma$ is DR-submodular, then it is also playerwise DR-submodular. Moreover, since the \texttt{DR} property is stronger than \texttt{weak DR}, $\gamma$ has generalized submodularity ratio $\eta = 1$. If $\gamma$ is not DR-submodular, however, the notions of \refdef{def:playerwise-submodularity_ratio} and \refdef{def:playerwise_DR} are not related. We visualize their differences in the following example. 
\begin{example}
 Consider a game with $N=2$, $d=2$, and $\gamma$ twice-differentiable. Let $\eta$ be the generalized submodularity ratio of $\gamma$. Assume the Hessian of $\gamma$ satisfies one of the three cases below, where with `$+$' or `$-$' we indicate the sign of its elements:
\begin{equation*}
\begin{gathered}
\tiny{
\begin{pmatrix}
    \tikzmarkin{a}(0.1,-0.1)(-0.1,0.2) - & - & - &- \\
    - &  -\tikzmarkend{a} & - & - \\
    - & - &   \tikzmarkin{b}(0.1,-0.1)(-0.1,0.2) -& -  \\
    - & - &- &- \tikzmarkend{b} 
\end{pmatrix}
    \, 
\begin{pmatrix}
     \tikzmarkin{c}(0.1,-0.1)(-0.1,0.2)  + & - & - &- \\
    - &  - \tikzmarkend{c}& - & - \\
    - & - &  \tikzmarkin{d}(0.1,-0.1)(-0.1,0.2) - & -  \\
    - & - &- & + \tikzmarkend{d}
\end{pmatrix}
    \,
\begin{pmatrix}
   \tikzmarkin{e}(0.1,-0.1)(-0.1,0.2) - & - & - &+ \\
    - &   -\tikzmarkend{e}& - & - \\
    - & - &  \tikzmarkin{f}(0.1,-0.1)(-0.1,0.2) -& -  \\
    + & - &- & - \tikzmarkend{f}
\end{pmatrix}
    } \\
   \text{1.} \hspace{.28\linewidth}\text{2.}  \hspace{.28\linewidth}\text{3.}
   \end{gathered}
   \end{equation*}
 \vspace{-1.5em}
 
From the previous definitions, the function $\gamma$ is playerwise DR-submodular iff all the entries highlighted in red are non-positive, while $\eta$ depends on all the off-diagonal entries. In case~1., all the entries are negative, hence $\gamma$ is DR-submodular. Thus, it is playerwise DR-submodular and has generalized submodularity ratio $\eta = 1$. In case~2., all off-diagonal entries are negative, hence $\gamma$ is weakly DR-submodular (see \refapp{app:properties_submodular}) and thus $\eta = 1$. However, $\gamma$ is not playerwise DR-submodular since some highlighted entries are positive. In case~3., $\gamma$ is playerwise DR-submodular and its generalized submodularity ratio depends on its parameters. 
\end{example}
Note that only case 1. of the previous example satisfies the conditions of \refthm{thm:main_theorem}. However, the following  \refthm{thm:weak_theorem} is applicable also to a subset of functions which fall in case 3. The proof can be found in \refapp{B}.

\vspace{-0.3em}
\begin{theorem}\label{thm:weak_theorem}
Let $\set{G}=(N, \{ \set{S}_i\}_{i=1}^N,  \{ \pi_i\}_{i=1}^N , \gamma)$ be a game where $\gamma$ is monotone, playerwise DR-submodular and has generalized submodularity ratio $\eta >0$. Then, if conditions ii) and iii) of \refdef{def:valid_continuous_game} are satisfied, $PoA_{CCE} \leq (1 +\eta)/\eta$.
\end{theorem}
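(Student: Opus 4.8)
The plan is to mirror the proof of \refthm{thm:main_theorem} in \refsec{sec:proof_of_thm1}, replacing each appeal to global DR-submodularity and to curvature by the two weaker hypotheses available here. Concretely, I would first show that for every pair of outcomes $\mathbf{s}, \mathbf{s}^\star \in \setbf{S}$,
\begin{equation*}
\sum\nolimits_{i=1}^N \pi_i(\mathbf{s}_i^\star, \mathbf{s}_{-i}) \; \geq \; \eta\,\gamma(\mathbf{s}^\star) \, - \, \eta\,\gamma(\mathbf{s}) \,,
\end{equation*}
which is the analogue of the $(1,\alpha)$-smoothness inequality of \refthm{thm:main_theorem}, now with both coefficients equal to $\eta$. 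Granted this, the closing argument is verbatim that of \refsec{sec:proof_of_thm1}: taking $\mathbf{s}^\star = \argmax_{\mathbf{s}\in\setbf{S}}\gamma(\mathbf{s})$ and chaining condition iii) of \refdef{def:valid_continuous_game}, the CCE inequality, and linearity of expectation gives, for every CCE $\sigma$, $\mathbb{E}_{\mathbf{s}\sim\sigma}[\gamma(\mathbf{s})] \geq \eta\,\gamma(\mathbf{s}^\star) - \eta\,\mathbb{E}_{\mathbf{s}\sim\sigma}[\gamma(\mathbf{s})]$, hence $(1+\eta)\,\mathbb{E}_{\mathbf{s}\sim\sigma}[\gamma(\mathbf{s})] \geq \eta\,\gamma(\mathbf{s}^\star)$ and therefore $PoA_{CCE} \leq (1+\eta)/\eta$.

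For the displayed inequality I would chain four steps, paralleling the five-line computation in \refsec{sec:proof_of_thm1}. (1) Condition ii) of \refdef{def:valid_continuous_game} evaluated at the outcome $(\mathbf{s}_i^\star, \mathbf{s}_{-i})$ yields $\pi_i(\mathbf{s}_i^\star, \mathbf{s}_{-i}) \geq \gamma(\mathbf{s}_i^\star, \mathbf{s}_{-i}) - \gamma(\mathbf{0}, \mathbf{s}_{-i})$. (2) In place of the global-DR step of \refthm{thm:main_theorem}, I would use \emph{playerwise} DR-submodularity (\refdef{def:playerwise_DR}): since $\gamma(\cdot, \mathbf{s}_{-i})$ is DR-submodular, \refprop{prop:group_DR_property} applied to it with $\mathbf{x} = \mathbf{0}$, $\mathbf{y} = \mathbf{s}_i$, $\mathbf{z} = \mathbf{s}_i^\star$ gives $\gamma(\mathbf{s}_i^\star, \mathbf{s}_{-i}) - \gamma(\mathbf{0}, \mathbf{s}_{-i}) \geq \gamma(\mathbf{s}_i + \mathbf{s}_i^\star, \mathbf{s}_{-i}) - \gamma(\mathbf{s})$; note that only DR-submodularity \emph{within} player $i$'s block is invoked, the remaining coordinates being frozen at $\mathbf{s}_{-i}$, so the weaker \refdef{def:playerwise_DR} suffices. (3) Summing over $i$ and invoking the generalized submodularity ratio (\refdef{def:playerwise-submodularity_ratio}) with $\mathbf{s}' = \mathbf{s}^\star$ gives $\sum_{i=1}^N [\gamma(\mathbf{s}_i + \mathbf{s}_i^\star, \mathbf{s}_{-i}) - \gamma(\mathbf{s})] \geq \eta\,[\gamma(\mathbf{s} + \mathbf{s}^\star) - \gamma(\mathbf{s})]$. (4) Monotonicity of $\gamma$ gives $\gamma(\mathbf{s} + \mathbf{s}^\star) \geq \gamma(\mathbf{s}^\star)$, so the right-hand side of (3) is at least $\eta\gamma(\mathbf{s}^\star) - \eta\gamma(\mathbf{s})$, completing the chain.

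The work here is less an obstacle than the observation that \refdef{def:playerwise-submodularity_ratio} and \refdef{def:playerwise_DR} were engineered precisely to carry the two ``DR'' steps of \refthm{thm:main_theorem}: the per-player telescoping step used global weak DR in \refthm{thm:main_theorem} but is exactly what the generalized submodularity ratio bounds, while the single-block step never needed more than DR-submodularity along one player's coordinates. The genuine price paid relative to the submodular case is that one can no longer apply \refprop{prop:equivalence_curvature} (which requires \emph{global} monotone DR-submodularity) in step (4); its sharper conclusion $\gamma(\mathbf{s}+\mathbf{s}^\star) \geq (1-\alpha)\gamma(\mathbf{s}) + \gamma(\mathbf{s}^\star)$ is replaced by the crude monotonicity bound $\gamma(\mathbf{s}+\mathbf{s}^\star) \geq \gamma(\mathbf{s}^\star)$, i.e. the $\alpha = 1$ instance, so at $\eta = 1$ the bound is $2$, consistent with the universal bound following \refthm{thm:main_theorem}. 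One last technical point, handled exactly as in the remark after the proof of \refthm{thm:main_theorem}: \refprop{prop:group_DR_property} in step (2) needs $\gamma(\cdot,\mathbf{s}_{-i})$ to be DR-submodular on a domain containing $\mathbf{s}_i + \mathbf{s}_i^\star$, which one ensures either by positing (for simplicity) playerwise DR-submodularity on all of $\reals{Nd}_+$ or by the truncation $\tilde\gamma(\mathbf{s}) = \gamma(\min(\mathbf{s},\mathbf{s}_{max}))$.
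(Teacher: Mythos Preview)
Your proposal is correct and follows essentially the same route as the paper's proof: both establish $(\eta,\eta)$-smoothness by chaining condition~ii), playerwise DR-submodularity via \refprop{prop:group_DR_property}, and the generalized submodularity ratio, then close with condition~iii) and the CCE inequality exactly as in \refsec{sec:proof_of_thm1}. Your step~(4) makes explicit the monotonicity bound $\gamma(\mathbf{s}+\mathbf{s}^\star)\geq\gamma(\mathbf{s}^\star)$ that the paper leaves implicit when it asserts $(\eta,\eta)$-smoothness from $\eta\,\gamma(\mathbf{s}+\mathbf{s}^\star)-\eta\,\gamma(\mathbf{s})$.
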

\vspace{-.8em}
In light of the previous comments, when $\gamma$ is DR-submodular  \refthm{thm:weak_theorem} yields a bound of 2 which is always higher than $(1+\alpha)$ from \refthm{thm:main_theorem}. This is because the notion of curvature in \refdef{def:curvature} cannot be used in the more general setting of \refthm{thm:weak_theorem} since $\gamma$ may not be DR-submodular. 

In \refapp{B} we show that examples of functions with generalized submodularity ratio $1>\eta>0$ are products of monotone weakly DR-submodular functions and monotone affine functions. As a consequence, the following generalization of \refexa{example_2} falls into the set-up of \refthm{thm:weak_theorem}. 
\vspace{-0.5em}

\textbf{Sensor coverage problem with non-submodular objective}.
 Consider the sensor coverage problem defined in \refexa{example_2}, where the weights $w_r$'s are monotone affine functions $w_r : \reals{Nd}_+ \rightarrow \reals{}_+$ rather than constants. For instance, the probability that an event occurs in location $r$ can increase with the average amount of energy allocated to that location. That is, $\gamma(\mathbf{x}) = \sum_{r \in [d]}w_r(\mathbf{x}) \: P(r,\mathbf{x})$ with $w_r(\mathbf{x}) = \mathbf{a}_r\: \frac{\sum_{i=1}^N[\mathbf{x}_i]_r}{N} + b_r$. To maximize $\gamma$ one could set up a game $\set{G}$ where condition ii) of \refdef{def:valid_continuous_game} is satisfied with equality, as shown in \refsec{sec:distributed_maximization}.  In \refapp{sec:extension_example} we show that $\gamma$ has generalized submodularity ratio $1 > \eta > 0$, it is playerwise DR-submodular, and that $\gamma(\mathbf{x}) \geq \frac{1}{2} \sum_{i=1}^N \pi_i(\mathbf{x})$  for every $\mathbf{x}$, which is a weaker version of condition iii). Nevertheless, using \refthm{thm:weak_theorem} and the last proof steps of \refsec{sec:proof_of_thm1} we prove that $PoA_{CCE} \leq (1 + 0.5\eta)/0.5\eta$. We also show that $\gamma$ is concave in each $\set{X}_i$. Therefore a distributed implementation of online gradient ascent maximizes $\gamma$ up to  $0.5\eta / (1 + 0.5\eta)$ approximations.
 
 \vspace{0.5em}
We remark that our definitions of curvature, submodularity ratio, and Theorems \ref{thm:main_theorem}-\ref{thm:weak_theorem} can also be applied to games and optimizations over integer domains, i.e., when $\set{S}_i \subseteq \mathbb{Z}^d_+$ and $\gamma$ is defined on integer lattices. 
\vspace{-.5em}
\section{Experimental results}\label{sec:examples}
\vspace{-1em}

In this section we analyze the examples defined in \refsec{sec:problem_formulation} using the developed framework. 

\begin{figure*}[!ht]
\centering
\begin{subfigure}{.32\linewidth}
\centering
  \includegraphics[width=1.1\linewidth]{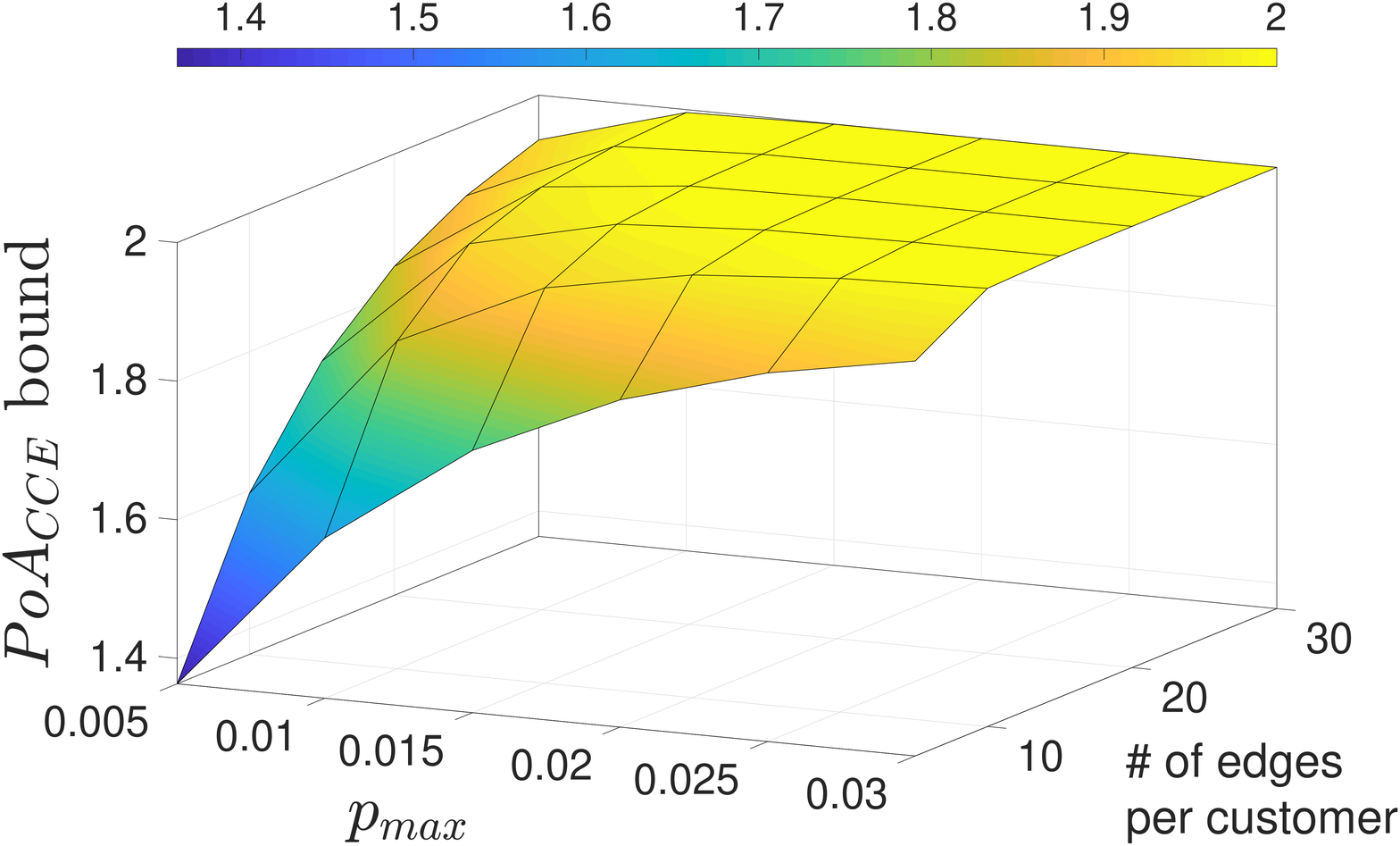}
  \caption{Budget allocation game}
  \label{fig:budget_alloc}
\end{subfigure}
$\quad$
\begin{subfigure}{.64\linewidth}
\centering
  \includegraphics[width=.49\linewidth]{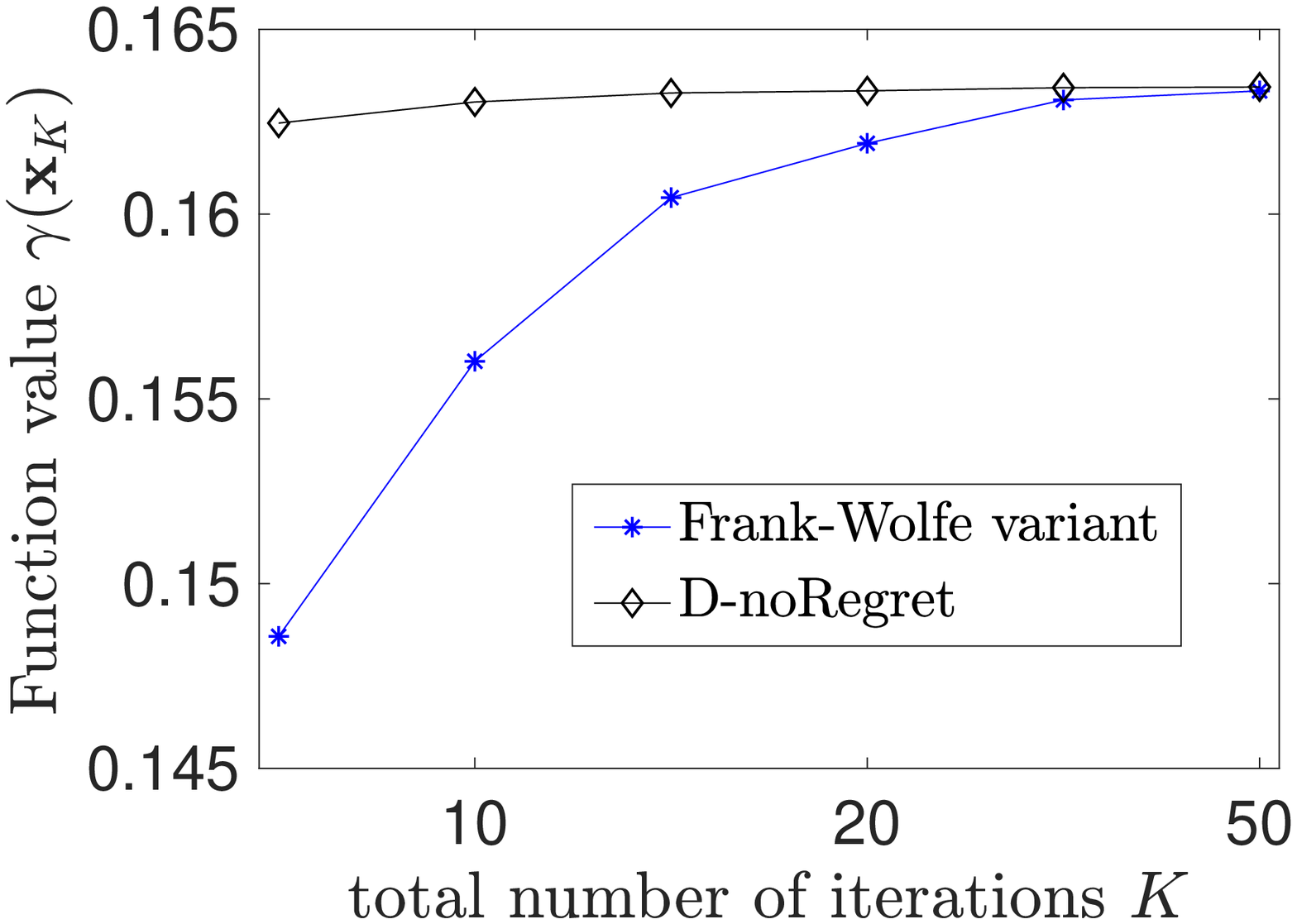}
  \includegraphics[width=.49\linewidth]{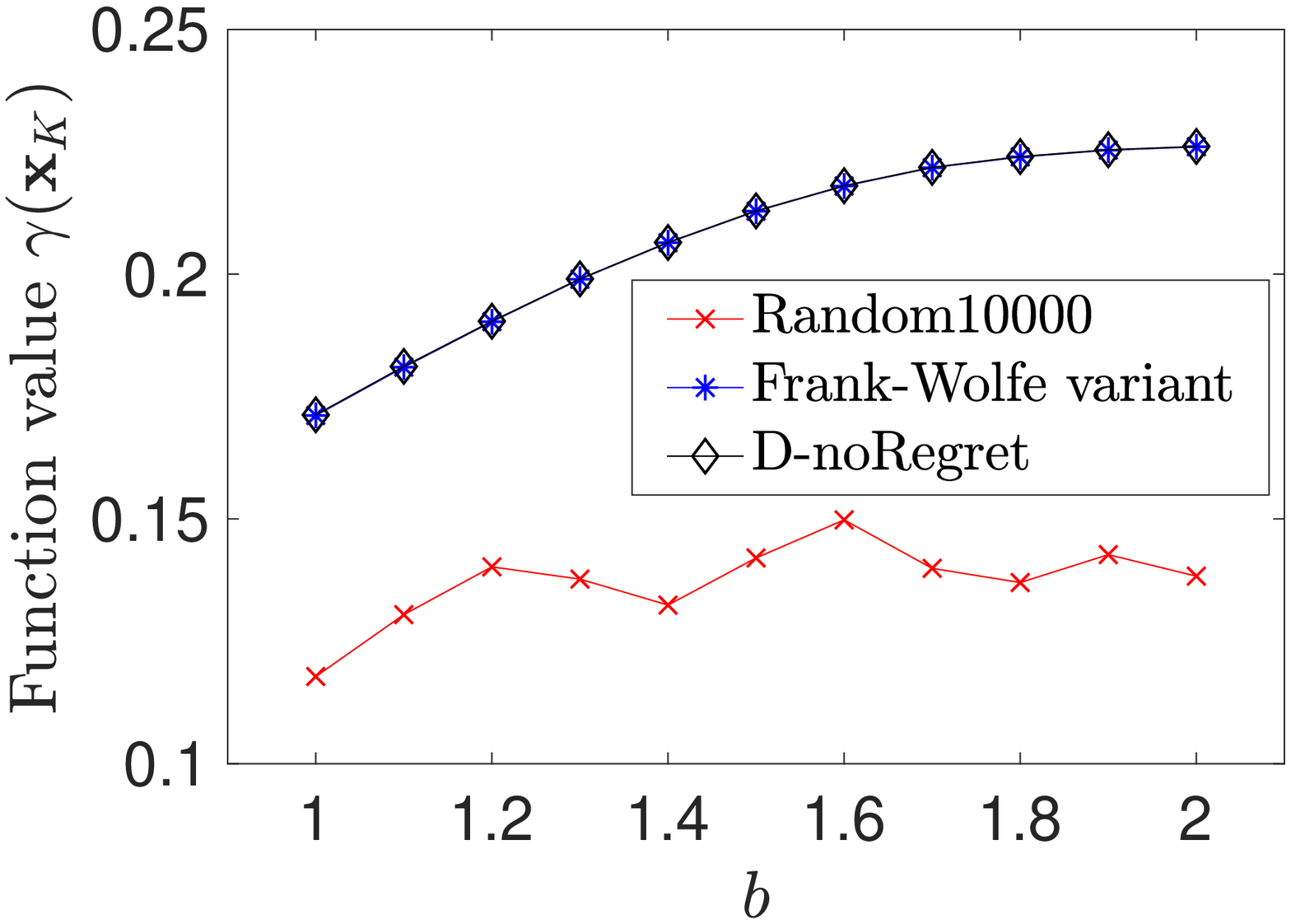}
  \caption{Sensor coverage problem}
  \label{fig:sensor_cov}
\end{subfigure}
\caption{a) Bounds for $PoA_{CCE}$, as a function of $p_{max}$ and the number of edges connected to each customer. The bounds strictly improve the bound of 2 provided by \cite{maehara2015} which does not depend on any of such parameters. b) Performance of \textsc{Frank-Wolfe} variant and \textsc{D-noRegret} for $K=3000$ iterations. Left: $\gamma(\mathbf{x}_K)$ as function of $K$. Right: $\gamma(\mathbf{x}_K)$ as function of the budgets $b$, with $K = 3000$. \textsc{D-noRegret} shows faster convergence, but for $K=3000$ the two algorithms perform equally.  }
\label{fig:both_examples}
\end{figure*}

\vspace{-0.8em}
\subsection{Continuous budget allocation game}
\vspace{-0.8em}

We consider $N = 10$ advertisers in a market with $d = 100$ channels and $\vert \set{T} \vert = 10'000$ customers. For the budget constraints we select $b_i=1$, $\bar{\mathbf{s}}_i = \mathds{1}$ and each entry of $\mathbf{c}_i$ is sampled uniformly at random from $[0,1]$. For each $i\in [N], r\in \set{R}, t \in \set{T}$, $p_i(r,t)$ is drawn uniformly at random in $[0.8 , 1]\:p_{max}$. In \reffig{fig:budget_alloc} we visualize the bound for $PoA_{CCE}$ obtained in \refprop{prop:Example_1} for different values of $p_{max}$ and the number of random edges connected to each  customer. The chosen ranges ensure that a sufficient fraction of customers will be activated. For instance, for $p_{max}=0.01$ and drawing $20$ random edges for each customer, we obtained
\footnote{Since $\gamma$ is monotone DR-submodular and $\setbf{S}$ is down-closed, we used the \textsc{Frank-Wolfe} variant by \cite{bian2017a} to maximize $\gamma$ up to $(1-e^{-1})$ approximations.}
an expected number of 2270 activated customers. This is in line with \cite{maehara2015}, where problem parameters were chosen such that 
$\frac{1}{5}$ of the customers are activated. As visible, the bound decreases when the activation probabilities decrease or when less edges are connected to each customer and can strictly improve the bound of $2$ provided by \cite{maehara2015}.



\vspace{-0.5em}
\subsection{Sensor coverage with continous assignments}
\vspace{-0.5em}

To maximize the probability $\gamma$ of detecting an event, we compare the performance of \textsc{D-noRegret} with the \textsc{Frank-Wolfe} variant by \cite{bian2017a} and a hit-and-run sampler \cite{kroese2013} \textsc{Random10000} which samples 10'000 random feasible points. We choose $N = 5$ sensors and $d = 30$ locations. For each budget constraint $\set{X}_i$, entries of $\mathbf{c}_i$ are chosen uniformly at random in $\frac{1}{d}[1,3]$, $b_i = 1$, and $\bar{\mathbf{x}}_i = \mathds{1}$. For each $i\in [N]$ and $r\in [d]$, we select $p_i^r = p = 0.05$ and $w_r$ uniformly at random in $[0, 1]$ such that $ \sum_{r=1}^d w_r= 1$. Under this choice, $\alpha \approx 0.4$, and \textsc{D-noRegret} has approximation guarantee $\approx 0.71$ which is greater than $(1- e^{-1}) \approx 0.63$. We initialize both \textsc{Frank-Wolfe} and \textsc{D-noRegret} at $\mathbf{x}_0 = 0$ and run them for $K \in \{10,20,50,100,500, 3000 \}$ iterations. Since the constraints are decoupled, also the \textsc{Frank-Wolfe} variant can be implemented distributively. Step sizes for both algorithms are chosen costant and proportional to $1/K$ and $1/\sqrt{K}$ as per \cite{bian2017a} and \cite[Lemma 3.2]{flaxman2005}, respectively. In \reffig{fig:sensor_cov} we compare the values of $\gamma(\mathbf{x}_K)$  as a function of the number of iterations $K$ (left plot). Moreover, for $K = 3000$, we compare the performance of the algorithms when we enlarge the constraints $\set{X}_i$ by choosing $b_i = b$ for each $i$, with  $b \in  \{1,1.1, 1.2,\ldots, 2\}$ (right plot). As visible, \textsc{D-noRegret} shows faster convergence than \textsc{Frank-Wolfe} variant. However, for $K=3000$ the two algorithms return the same values. Average computation times per iteration are $0.019~\mathrm{s}$ and $0.009~\mathrm{s}$ for \textsc{Frank-Wolfe} and \textsc{D-noRegret}, respectively, on a 16 Gb machine at 3.1 GHz using Matlab.


\vspace{-.5em}
\section{Conclusions and future work}
\vspace{-0.9em}

We bounded the robust price of anarchy for a subclass of continuous games, denoted as valid utility games with continuous strategies. Our bound relies on a particular structure of the game and on the social function being monotone DR-submodular. We introduced the notion of curvature of a monotone DR-submodular function and refined the bound using this notion. In addition, we extended the obtained bounds to a class of non-submodular functions.  
We showed that valid utility games can be designed to maximize monotone DR-submodular functions subject to disjoint constraints. For a subclass of such functions, our approximation guarantees improve the ones in the literature. 
We demonstrated our results numerically via a continuous budget allocation game and a sensor coverage problem.
In light of the obtained approximation guarantees, we believe that the introduced notion of curvature of a monotone DR-submodular function can be used to tighten existing guarantees for constrained maximization. Currently, we are studying the tightness of the obtained bounds and their applicability to several continuous games such as auctions. 

%

\bibliographystyle{plain} 
\bibliography{../../My_Biblio}

\clearpage
\appendix
\section{Supplementary material for Sections~\ref{sec:main_results}-\ref{sec:background}}\label{A}

\subsection{Proof of \refrem{remark:curvature_in_zero_one}}
Since $f$ is monotone, $f(\mathbf{x}+ k\mathbf{e}_i) \geq f(\mathbf{x})$ and $f(k\mathbf{e}_i) \geq f(\mathbf{0})$ for any $\mathbf{x} \in  \setbf{Z}$, $i \in [n]$, and $k \in \reals{}_+$. Hence, $\alpha(\setbf{Z}) \leq 1$. Moreover,  $\inf_{\substack{\mathbf{x}\in \setbf{Z} \\ i \in [n]}}{\lim_{k\rightarrow 0^+}\frac{f(\mathbf{x}+ k \mathbf{e}_i) - f(\mathbf{x})}{f(k \mathbf{e}_i)- f(\mathbf{0})}}   \leq 1$, since the considered ratio equals 1 when $\mathbf{x} = \mathbf{0}$. Hence, $\alpha(\setbf{Z}) \geq 0$. \hfill \qed

\subsection{Proof of \refrem{remark:bound_2}}
The proof is obtained simply noting that the curvarture $\alpha(\tilde{\setbf{S}})$ of $\gamma$ is always upper bounded by 1.
 \hfill \qed

\subsection{Proof of \refprop{prop:Example_1}}
We first show that the budget allocation game of Example \ref{example_1} is a valid utility game with continuous strategies. In fact, for any $l \in [Nd]$
\begingroup
\setlength{\abovedisplayskip}{3pt}
\setlength{\belowdisplayskip}{3pt} 
\begin{equation*}
[\nabla \gamma(\mathbf{s})]_l = \sum_{ t\in \set{T} : m \in \Gamma(t)} - \ln (1 - p_j(m,t))  \prod_{i = 1}^N(1-P_i(\mathbf{s}_i,t)) \, ,
\end{equation*}
\endgroup
where $j\in [N]$ and $m \in [d]$ are the indexes of advertiser and channel corresponding to coordinate $l \in [Nd]$, respectively. Hence, $\gamma$ is monotone since $[\nabla \gamma(\mathbf{s})]_l \geq 0$ for any $l \in [Nd]$ and $\mathbf{s} \in \reals{Nd}_+$. Moreover, $\gamma$ is DR-submodular since $\gamma(\mathbf{s})= \sum_{t \in \set{T}} \gamma_t(\mathbf{s})$ where $\gamma_t(\mathbf{s}) =  1 - \prod_{i=1}^N(1- P_i(\mathbf{s}_i,t))$ is such that for any $j,l \in [N],m,n \in [d]$, $\frac{\partial^2 \gamma_t(\mathbf{s})}{\partial [\mathbf{s}_j]_m \partial [\mathbf{s}_l]_n}=  -\ln(1 - p_j(m,t) \ln(1- p_l(n,t))  \prod_{i = 1}^N(1-P_i(\mathbf{s}_i,t)) \leq 0$ for any $\mathbf{s} \in \reals{Nd}_+$. Finally, condition ii) can be verified equivalently as in \cite[Proof of Proposition 5]{maehara2015} and condition iii) holds with equality. 

The set $\setbf{\tilde{S}} :=\{ \mathbf{x} \in \reals{Nd}_+ \mid \mathbf{0} \leq \mathbf{x} \leq  \mathbf{s}_{max} \}$ with $\mathbf{s}_{max} = 2(\bar{s}_1,\ldots,  \bar{s}_N)$ is such that $\mathbf{s} + \mathbf{s}' \leq \mathbf{s}_{max}$ for any pair $\mathbf{s},\mathbf{s}' \in \setbf{S}$. Moreover, using the expression of $\nabla \gamma(\mathbf{s})$, the curvature of $\gamma$ with respect to $\setbf{\tilde{S}}$ is  
\begingroup
\setlength{\abovedisplayskip}{6pt}
\setlength{\belowdisplayskip}{6pt} 
\begin{align*}
& 1-\alpha( \tilde{\setbf{S}}) = \inf\nolimits_{\substack{\mathbf{s}\in \tilde{\setbf{S}} \\ l \in [Nd]}}\frac{[\nabla \gamma(\mathbf{s})]_l}{[\nabla \gamma(0)]_l} = \\
&  \min_{ i \in [N], r\in [d]} \frac{  \sum\limits_{t \in \set{T}: r \in \Gamma(t)} \ln(1-p_i(r,t))\prod\limits_{j \in [N]}(1-P_j(2\bar{\mathbf{s}}_j,t))}{ \sum\limits_{t \in \set{T}: r \in \Gamma(t)} \ln(1-p_i(r,t)) } \\
& =:  1 -\alpha > 0 .
 \end{align*}
\endgroup 
Hence, using \refthm{thm:main_theorem} we conclude that $PoA_{CCE} \leq 1 + \alpha$.  \hfill \qed

\subsection{Proof of Fact 1}
Condition i) holds since $\gamma$ is monotone DR-submodular by definition. Also, condition ii) holds with equality. Moreover, defining (with abuse of notation) $[\mathbf{s}]_1^i = (\mathbf{s}_1, \ldots, \mathbf{s}_i, \mathbf{0}, \ldots, \mathbf{0})$ for $i\in [N]$ with $[\mathbf{s}]_1^0 = \mathbf{0}$, condition iii) holds since by DR-submodularity one can verify that $\sum_{i=1}^N \hat{\pi}_i(\mathbf{s}) =  \sum_{i=1}^N \gamma(\mathbf{s}) - \gamma(\mathbf{0}, \mathbf{s}_{-i}) \leq \gamma([\mathbf{s}]_1^i) - \gamma([\mathbf{s}]_1^{i-1}) = \gamma(\mathbf{x}) - \gamma(\mathbf{0}) = \gamma(\mathbf{x})$. \hfill \qed

\subsection{Proof of \refcor{cor:D_no_regret}}
By definition of $\alpha$, and according to \refthm{thm:main_theorem}, $\hat{\set{G}}$ is such that $PoA_{CCE} \leq (1 + \alpha)$. In other words, letting $\mathbf{s}^\star = \argmax_{\mathbf{s}\in \setbf{S}}\gamma(\mathbf{s})$, any CCE $\sigma$ of $\hat{\set{G}}$ satisfies 
$\mathbb{E}_{\mathbf{s} \sim \sigma} [ \gamma(\mathbf{s})] \geq 1/( 1 + \alpha)\gamma(\mathbf{s}^\star)$. Moreover, since players simultaneously use no-regret algorithms \textsc{D-noRegret} converges to one of such CCE \cite{gordon2008,roughgarden2015}. Hence, the statement of the remark follows.  \hfill \qed

\subsection{Proof of \refprop{prop:Example_2}}
Consider the sensor coverage problem with continuous assignments defined in Example \ref{example_2}. 
We first show that $\gamma$ is a monotone DR-submodular function. In fact, for any $i \in [Nd]$, $[\nabla \gamma(\mathbf{x})]_i = -\ln(1-p_l^m) \prod_{i \in [N]}(1-p_i^m)^{[\mathbf{x}_i]_m} \geq 0$, where $l$ and $m$ and the indexes of sensor and location corresponding to coordinate $i$, respectively. Moreover, for any pair of sensors $j,l \in [N]$, $\frac{\partial^2 \gamma(\mathbf{x})}{\partial [\mathbf{x}_j]_m \partial [\mathbf{x}_l]_n}= 
-\ln(1-p_j^m) \ln(1-p_l^n) \prod_{i \in [N]}(1-p_i^m)^{[\mathbf{x}_i]_m}  \leq 0$ if  $m = n$, and $0$ otherwise. 
The problem of maximizing $\gamma$ subject to $\setbf{X} = \prod_{i=1}^N\set{X}_i$, hence, is one of maximizing a monotone DR-submodular function subject to decoupled constraints discussed in \refsec{sec:distributed_maximization}. Thus, as outlined in \refsec{sec:distributed_maximization}, we can set-up a valid utility game $\hat{\set{G}}$.

The vector $ \mathbf{x}_{max} = 2\bar{\mathbf{x}} = 2(\bar{\mathbf{x}}_1, \ldots, \bar{\mathbf{x}}_N)$ is such that   $\forall \mathbf{x},\mathbf{x}'\in \setbf{X}$, $\mathbf{x}+\mathbf{x}' \leq   \mathbf{x}_{max}$. Moreover, defining $\setbf{\tilde{X}} :=\{ \mathbf{x} \in \reals{Nd}_+ \mid \mathbf{0} \leq \mathbf{x} \leq  \mathbf{x}_{max} \}$, the curvature of $\gamma$ with respect to $\tilde{\setbf{X}}$, satisfies $\alpha( \tilde{\setbf{X}}) = 1 - \inf_{\substack{\mathbf{x}\in \tilde{\setbf{X}} \\ l \in [Nd]}}\frac{[\nabla \gamma(\mathbf{x})]_l}{[\nabla \gamma(0)]_l} = 1 - \min_{r \in [d]} \prod_{i \in [N]}(1-p_i^r)^{2\bar{\mathbf{x}}_i}  = \max_{r \in [d]}P(r, 2\bar{\mathbf{x}}) = \alpha$.
Hence, by \refcor{cor:D_no_regret}, any no-regret distributed algorithm has expected approximation ratio of $1/(1+ \alpha)$. In addition, $\gamma$ is also concave in each $\set{X}_i$, since the $(d\times d)$ blocks on the diagonal of its Hessian are diagonal and negative, hence online gradient ascent ensures no-regret for each player \cite{flaxman2005} and can be run in a distributed manner. \hfill \qed

\subsection{Properties of (twice) differentiable submodular functions}\label{app:properties_submodular}
As mentioned in \refsec{sec:background}, submodular continuous functions are defined on subsets of $\reals{n}$ of the form $\setbf{X} = \prod_{i=1}^n \set{X}_i$, where each $\set{X}_i$ is a compact subset of $\reals{}$. From the \texttt{weak DR} property (\refdef{def:weak_DR_property}) it follows that, when $f$ is differentiable, it is submodular iff 
$$\forall \mathbf{x}, \mathbf{y} \in \setbf{X} : \mathbf{x} \leq \mathbf{y}, \forall i \text{ s.t. } x_i = y_i , \: \nabla_i f(\mathbf{x} ) \geq \nabla_i f(\mathbf{y}) \,.$$
That is, the gradient of $f$ is a weak antitone mapping from $\mathbb{R}^n$ to $\mathbb{R}^n$. 

Moreover, we saw that a function $f: \setbf{X} \rightarrow \reals{}$ is \emph{submodular} iff
for all $\mathbf{x} \in \setbf{X}$, $\forall i \neq j$ and $a_i,a_j >0$ s.t. $x_i + a_i \in \mathcal{X}_i$, $x_j + a_j \in \mathcal{X}_j$, we have \cite{bach2015}
\begin{equation*}
f(\mathbf{x} + a_i \mathbf{e}_i ) - f(\mathbf{x})  \geq    f(\mathbf{x} + a_i \mathbf{e}_i+  a_j \mathbf{e}_j) - f(\mathbf{x} + a_j \mathbf{e}_j) \,.
\end{equation*}
As visible from the latter condition, when $f$ is twice-differentiable, it is submodular iff all the off-diagonal entries of its Hessian are non-positive \cite{bach2015}:
\begin{equation*}
\forall \mathbf{x} \in \setbf{X}, \quad \frac{\partial^2 f(\mathbf{x})}{\partial x_i \partial x_j} \leq 0 ,\quad \forall i \neq j  \, .
\end{equation*}
Hence, the class of submodular continuous functions contains a subset of both convex and concave functions.

Similarly, from the \texttt{DR} property (\refdef{def:DR_property}) it follows that for a differentiable continuous function DR-submodularity is equivalent to 
$$\forall \mathbf{x} \leq \mathbf{y}, \nabla f(\mathbf{x} ) \geq \nabla f(\mathbf{y}) \,.$$
That is, the gradient of $f$ is an antitone mapping from $\mathbb{R}^n$ to $\mathbb{R}^n$. More precisely, \cite[Proposition 2]{bian2017a} showed that a function $f$ is DR-submodular iff it is submodular (weakly DR-submodular) and \emph{coordinate-wise concave}. A function $f: \setbf{X} \rightarrow \mathbb{R}$ is coordinate-wise concave if, for all $\mathbf{x} \in \setbf{X}$, $\forall i \in [n], \forall k,l \in \mathbb{R}_+$ s.t. $(\mathbf{x} + k \mathbf{e}_i)$, $(\mathbf{x} + l \mathbf{e}_i)$, and $(\mathbf{x} + (k+l) \mathbf{e}_i)$ are in $ \setbf{X}$, we have
\begin{equation*}
f(\mathbf{x} + k \mathbf{e}_i ) - f(\mathbf{x})  \geq    f(\mathbf{x} + (k+l) \mathbf{e}_i) - f(\mathbf{x} + l \mathbf{e}_i) \,,
\end{equation*}
or equivalently, if twice differentiable, $\frac{\partial^2 f(\mathbf{x})}{\partial x_i^2} \leq 0$ $\forall i \in [n]$.
Hence, as stated in \refsec{sec:main_results}, a twice-differentiable function is DR-submodular iff all the entries of its Hessian are non-positive:
\begin{equation*}
\forall \mathbf{x} \in \setbf{X}, \quad \frac{\partial^2 f(\mathbf{x})}{\partial x_i \partial x_j} \leq 0 ,\quad \forall i, j  \, .
\end{equation*}
\subsection{Proof of \refprop{prop:group_weak_DR_property}}
(property of \refprop{prop:group_weak_DR_property}  $\rightarrow$ \texttt{weak DR}) \\
We want to prove that for all $\mathbf{x} \leq \mathbf{y} \in \setbf{X}$, $\forall i$ s.t. $x_i = y_i$, $\forall k \in \mathbb{R}_+$ s.t. $(\mathbf{x} + k \mathbf{e}_i)$ and $(\mathbf{y} + k \mathbf{e}_i)$ are in $\setbf{X}$,
\begin{equation*}
f(\mathbf{x} + k \mathbf{e}_i ) - f(\mathbf{x})  \geq    f(\mathbf{y} + k \mathbf{e}_i) - f(\mathbf{y}) \,.
\end{equation*}
This is trivially done choosing $\mathbf{z} =  k\mathbf{e}_i$. Note that $\mathbf{z}$ is such that $z_i = 0,  \forall i \in \{ i | y_i > x_i\}$, so the property of \refprop{prop:group_weak_DR_property} can indeed be applied. 

(\texttt{weak DR} $\rightarrow$ property of \refprop{prop:group_weak_DR_property}) \\
For all $\mathbf{x} \leq \mathbf{y} \in  \setbf{X}$, $\forall \mathbf{z} \in \reals{n}_+$ s.t. $(\mathbf{x} +  \mathbf{z})$ and $(\mathbf{y} +  \mathbf{z})$ are in $\setbf{X}$, with $z_i = 0$ $\forall i \in [n]:   y_i > x_i $, we have
\begin{align*}
f(\mathbf{x} + \mathbf{z}) & - f(\mathbf{x}) = \sum_{i =1}^n f(\mathbf{x} + [\mathbf{z}]_{1}^i) - f(\mathbf{x} + [\mathbf{z}]_{1}^{i-1}) \\
& = \sum_{i : x_i = y_i } f(\mathbf{x} +[\mathbf{z}]_{1}^{i-1}+ z_i \mathbf{e}_i ) - f(\mathbf{x} + [\mathbf{z}]_{1}^{i-1}) \\
& \geq \sum_{i : x_i = y_i } f(\mathbf{y} +[\mathbf{z}]_{1}^{i-1}+ z_i \mathbf{e}_i ) - f(\mathbf{y} + [\mathbf{z}]_{1}^{i-1}) \\
& = \sum_{i =1}^n f(\mathbf{y} + [\mathbf{z}]_{1}^i) - f(\mathbf{y} + [\mathbf{z}]_{1}^{i-1}) \\
& = f(\mathbf{y} + \mathbf{z}) - f(\mathbf{y})  \,.
\end{align*}
 The first equality is obtained from a telescoping sum, the second equality follows since when $y_i > x_i$, $z_i = 0$. The inequality follows from \texttt{weak DR} property of $f$ and the last two equalities are similar to the first two.  \hfill \qed 
 
\noindent
\subsection{Proof of \refprop{prop:group_DR_property}}
(property of \refprop{prop:group_DR_property} $\rightarrow$ \texttt{DR}) \\
We want to prove that for all $\mathbf{x} \leq \mathbf{y} \in \setbf{X}$, $\forall i \in [n]$, $\forall k \in \mathbb{R}_+$ s.t. $(\mathbf{x} + k \mathbf{e}_i)$ and $(\mathbf{y} + k \mathbf{e}_i)$ are in $\setbf{X}$,
\begin{equation*}
f(\mathbf{x} + k \mathbf{e}_i ) - f(\mathbf{x})  \geq    f(\mathbf{y} + k \mathbf{e}_i) - f(\mathbf{y}) \,.
\end{equation*}
This is trivially done choosing $\mathbf{z} =  k\mathbf{e}_i$ and applying the property of \refprop{prop:group_DR_property}. 

(\texttt{DR} $\rightarrow$ property of \refprop{prop:group_DR_property}) \\
For all $\mathbf{x} \leq \mathbf{y} \in  \setbf{X}$, $\forall \mathbf{z} \in \reals{n}_+$ s.t. $(\mathbf{x} +  \mathbf{z})$ and $(\mathbf{y} +  \mathbf{z})$ are in $\setbf{X}$, we have
\begin{align*}
f(\mathbf{x} + \mathbf{z}) & - f(\mathbf{x}) = \sum_{i =1}^n f(\mathbf{x} + [\mathbf{z}]_{1}^i) - f(\mathbf{x} + [\mathbf{z}]_{1}^{i-1}) \\
& =\sum_{i =1}^n f(\mathbf{x} +[\mathbf{z}]_{1}^{i-1}+ z_i \mathbf{e}_i ) - f(\mathbf{x} + [\mathbf{z}]_{1}^{i-1}) \\
& \geq \sum_{i =1}^n f(\mathbf{y} +[\mathbf{z}]_{1}^{i-1}+ z_i \mathbf{e}_i ) - f(\mathbf{y} + [\mathbf{z}]_{1}^{i-1}) \\
& = \sum_{i =1}^n f(\mathbf{y} + [\mathbf{z}]_{1}^i) - f(\mathbf{y} + [\mathbf{z}]_{1}^{i-1}) \\
& = f(\mathbf{y} + \mathbf{z}) - f(\mathbf{y})  \,.
\end{align*}
 The first  and last equalities are telescoping sums and the inequality follows from the \texttt{DR} property of $f$.  \hfill \qed 

\subsection{Proof of \refprop{prop:equivalence_curvature}}
By \refdef{def:curvature}, the curvature $\alpha(\setbf{Z})$ of $f$ w.r.t. $\setbf{Z}$ satisfies
\begin{equation}\label{aux:condition}
f(\mathbf{x}+ k \mathbf{e}_i) - f(\mathbf{x}) \geq (1-\alpha(\setbf{Z})) [f(k \mathbf{e}_i)- f(\mathbf{0})] \, ,
\end{equation} 
for any $\mathbf{x}\in \setbf{Z} , i \in [n]$ s.t. $\mathbf{x} +  k \mathbf{e}_i \in \setbf{Z}$ with $k \rightarrow 0_+$. We firstly show that condition \eqref{aux:condition} indeed holds for any $\mathbf{x}\in \setbf{Z} , i \in [n]$, and $k \in \reals{}_+$ s.t. $\mathbf{x} +  k \mathbf{e}_i \in \setbf{Z}$, by using monotonicity and coordinate-wise concavity of $f$. As seen in \refapp{app:properties_submodular}, DR-submodularity implies coordinate-wise concavity.  To this end, we define 
\begin{equation*}
  \alpha_i^k( \setbf{Z}) = 1 - \inf_{\substack{\mathbf{x}\in \setbf{Z} :\\\mathbf{x}+ k \mathbf{e}_i \in \setbf{Z}}}\frac{f(\mathbf{x}+ k \mathbf{e}_i) - f(\mathbf{x})}{f(k \mathbf{e}_i)- f(\mathbf{0})}   \,.
 \end{equation*}
Hence, it sufficies to prove that, for any $i \in [n]$, $\alpha_i^k( \setbf{Z})$ is non-increasing in $k$.
Note that by DR-submodularity, 
\begin{equation*}
  \alpha_i^k( \setbf{Z}) = 1 - \frac{f(\mathbf{z}_{max}) - f(\mathbf{z}_{max} - k \mathbf{e}_i)}{f(k \mathbf{e}_i)- f(\mathbf{0})}   \,.
 \end{equation*}
 Hence, for any pair $l,m \in \reals{}_+$ with $l < m$,  $\alpha_i^m( \setbf{Z}) \geq  \alpha_i^l( \setbf{Z})$ is true whenever 
 \begin{equation*}
  \frac{f(\mathbf{z}_{max}) - f(\mathbf{z}_{max} - m \mathbf{e}_i)}{f(m \mathbf{e}_i)- f(\mathbf{0})} \geq \frac{f(\mathbf{z}_{max}) - f(\mathbf{z}_{max} - l \mathbf{e}_i)}{f(l \mathbf{e}_i)- f(\mathbf{0})} 
  \,.
 \end{equation*}
The last inequality is satisfied since, by coordinate-wise concavity, 
 $[f(\mathbf{z}_{max}) - f(\mathbf{z}_{max} - m \mathbf{e}_i)]/m  \geq[f(\mathbf{z}_{max}) - f(\mathbf{z}_{max} - l \mathbf{e}_i)]/l$
 and $[f(m \mathbf{e}_i) - f(	\mathbf{0})]/m  \leq  [f( l \mathbf{e}_i) - f(\mathbf{0})]/l$. This is because, given a concave function $g:\reals{} \rightarrow \reals{}$, the quantity
 \begin{equation*}
 R(x_1, x_2):= \frac{g(x_2) - g(x_1)}{x_2 - x_1}
\end{equation*}  
 is non-increasing in $x_1$ for fixed $x_2$, and vice versa. 
Moreover, monotonicity ensures that all of the above ratios are non-negative.


To conclude the proof of \refprop{prop:equivalence_curvature} we show that if condition \eqref{aux:condition} holds for any $\mathbf{x}\in \setbf{Z} , i \in [n]$, and $k \in \reals{}_+$ s.t. $\mathbf{x} +  k \mathbf{e}_i \in \setbf{Z}$, then the result of the proposition follows. Indeed, for any $\mathbf{x}, \mathbf{y}$ s.t. $\mathbf{x}+ \mathbf{y} \in \setbf{Z}$ we have
\begin{align*}
f(\mathbf{x} + \mathbf{y}) & - f(\mathbf{x}) = \sum_{i =1}^n f(\mathbf{x} + [\mathbf{y}]_{1}^i) - f(\mathbf{x} + [\mathbf{y}]_{1}^{i-1}) \\
& = \sum_{i =1}^n f(\mathbf{x} +[\mathbf{y}]_{1}^{i-1}+ y_i \mathbf{e}_i ) - f(\mathbf{x} + [\mathbf{y}]_{1}^{i-1}) \\
& \geq (1-\alpha(\setbf{Z})) \sum_{i =1}^n f(y_i \mathbf{e}_i) - f(\mathbf{0}) \\
& \geq (1-\alpha(\setbf{Z})) \sum_{i =1}^n f([\mathbf{y}]_{1}^i) - f([\mathbf{y}]_{1}^{i-1}) \\ 
& = (1-\alpha(\setbf{Z}))  (f(\mathbf{y}) - f(\mathbf{0}))  \, ,
\end{align*}
where the first inequality follows by condition \eqref{aux:condition} and the second one from $f$ being weakly DR-submodular (and using \refprop{prop:group_weak_DR_property}). \hfill \qed


 \clearpage
\section{Supplementary material for \refsec{sec:extension}}\label{B}

In the first part of this appendix we generalize the submodularity ratio defined in \cite{das2011} for set functions to continuous domains and discuss its main properties. We compare it to the ratio by \cite{hassani2017} and we relate it to the generalized submodularity ratio defined in \refdef{def:playerwise-submodularity_ratio}. Then, we provide a class of social functions with generalized submodularity ratio $0< \eta <1$ and we report the proof of \refthm{thm:weak_theorem}. Finally, we analyze the sensor coverage problem with the non-submodular objective defined in \refsec{sec:extension}.

\subsection{Submodularity ratio of a monotone function on continuous domains}

We generalize the class of submodular continuous functions, defining the \emph{submodularity ratio} $\eta \in [0,1]$ of a monotone function defined on a continuous domain. 

\begin{definition}[submodularity ratio]\label{def:weak_DR_sub_ratio}
  The submodularity ratio of a monotone function $f: \setbf{X} \subseteq \reals{n}_+ \rightarrow \mathbb{R}$ is the largest scalar $\eta$ such that  for all $\mathbf{x}, \mathbf{y} \in \setbf{X}$ such that $\mathbf{x} + \mathbf{y} \in \setbf{X}$,
\begin{equation*}
\sum_{i=1}^n \big[ f(\mathbf{x} + y_i \mathbf{e}_i) - f(\mathbf{x}) \big] \geq \eta \big[ f(\mathbf{x} +  \mathbf{y}) - f(\mathbf{x}) \big] \,. 
\end{equation*}
\end{definition} 

It is straightforward to show that $\eta \in [0,1]$ and, when restricted to binary sets $\setbf{X} = \{0,1\}^n$, \refdef{def:weak_DR_sub_ratio} coincides with the submodularity ratio defined in \cite{das2011} for set functions. A set function is submodular iff it has submodularity ratio $\eta = 1$ \cite{das2011}. However, functions with submodularity ratio $0 < \eta <1$ still preserve `nice' properties in term of maximization guarantees. Similarly to \cite{das2011}, we can affirm the following. 

\begin{proposition}\label{prop:weak_submodularity_ratio}
A function $f: \setbf{X} \subseteq \reals{n}_+ \rightarrow \mathbb{R}$ is weakly DR-submodular (\refdef{def:weak_DR_property}) iff it has submodularity ratio $\eta = 1$.
\end{proposition}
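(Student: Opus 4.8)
The plan is to prove the equivalence in \refprop{prop:weak_submodularity_ratio} by showing both implications, using the characterization of weak DR-submodularity from \refprop{prop:group_weak_DR_property}.

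\textbf{Direction $(\Leftarrow)$: submodularity ratio $\eta=1$ implies weakly DR-submodular.}
First I would assume that $f$ has submodularity ratio $\eta = 1$, i.e. for all $\mathbf{x},\mathbf{y}\in\setbf{X}$ with $\mathbf{x}+\mathbf{y}\in\setbf{X}$,
\begin{equation*}
\sum_{i=1}^n \big[ f(\mathbf{x} + y_i \mathbf{e}_i) - f(\mathbf{x}) \big] \geq f(\mathbf{x} + \mathbf{y}) - f(\mathbf{x}) \,.
\end{equation*}
I want to recover the defining inequality of \refdef{def:weak_DR_property}: for $\mathbf{u}\le\mathbf{v}$ with $u_j=v_j$ and $k\ge 0$ with $\mathbf{u}+k\mathbf{e}_j,\mathbf{v}+k\mathbf{e}_j\in\setbf{X}$, one has $f(\mathbf{u}+k\mathbf{e}_j)-f(\mathbf{u})\ge f(\mathbf{v}+k\mathbf{e}_j)-f(\mathbf{v})$. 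The idea is to apply the $\eta=1$ inequality with the base point $\mathbf{x}:=\mathbf{u}$ and the increment $\mathbf{y}:=(\mathbf{v}-\mathbf{u})+k\mathbf{e}_j$ (note $y_j=k$ since $u_j=v_j$, and $y_i=v_i-u_i\ge 0$ for $i\neq j$), so that $\mathbf{x}+\mathbf{y}=\mathbf{v}+k\mathbf{e}_j$. This gives
\begin{equation*}
\big[f(\mathbf{u}+k\mathbf{e}_j)-f(\mathbf{u})\big] + \sum_{i\neq j}\big[f(\mathbf{u}+(v_i-u_i)\mathbf{e}_i)-f(\mathbf{u})\big] \ge f(\mathbf{v}+k\mathbf{e}_j)-f(\mathbf{u}) \,.
\end{equation*}
This is not immediately the inequality I want because the cross terms $f(\mathbf{u}+(v_i-u_i)\mathbf{e}_i)-f(\mathbf{u})$ remain. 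So instead I would argue by a telescoping/induction argument: it suffices to prove the single-step version where $\mathbf{v}=\mathbf{u}+c\mathbf{e}_\ell$ for a single coordinate $\ell\neq j$ and $c\ge 0$, because a general $\mathbf{v}\ge\mathbf{u}$ with $v_j=u_j$ is reached by increasing one coordinate at a time and chaining the inequalities. For this single-step version, apply the $\eta=1$ inequality at base point $\mathbf{u}$ with increment $\mathbf{y}=c\mathbf{e}_\ell+k\mathbf{e}_j$ to get $[f(\mathbf{u}+c\mathbf{e}_\ell)-f(\mathbf{u})]+[f(\mathbf{u}+k\mathbf{e}_j)-f(\mathbf{u})]\ge f(\mathbf{u}+c\mathbf{e}_\ell+k\mathbf{e}_j)-f(\mathbf{u})$, which rearranges exactly to $f(\mathbf{u}+k\mathbf{e}_j)-f(\mathbf{u})\ge f(\mathbf{u}+c\mathbf{e}_\ell+k\mathbf{e}_j)-f(\mathbf{u}+c\mathbf{e}_\ell)$, the two-coordinate weak DR inequality. (Since $\eta=1$ holds for all valid $\mathbf{x},\mathbf{y}$, in particular it holds for these two-variable increments, which is where the strength of $\eta=1$ is used.)

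\textbf{Direction $(\Rightarrow)$: weakly DR-submodular implies submodularity ratio $\eta=1$.}
Assuming $f$ is weakly DR-submodular, I must show $\sum_{i=1}^n[f(\mathbf{x}+y_i\mathbf{e}_i)-f(\mathbf{x})]\ge f(\mathbf{x}+\mathbf{y})-f(\mathbf{x})$ for all admissible $\mathbf{x},\mathbf{y}$. Expand the right-hand side as a telescoping sum, $f(\mathbf{x}+\mathbf{y})-f(\mathbf{x})=\sum_{i=1}^n\big[f(\mathbf{x}+[\mathbf{y}]_1^{i-1}+y_i\mathbf{e}_i)-f(\mathbf{x}+[\mathbf{y}]_1^{i-1})\big]$, using the notation $[\mathbf{y}]_1^i=(y_1,\dots,y_i,0,\dots,0)$ from the paper. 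For each $i$, I compare the term $f(\mathbf{x}+[\mathbf{y}]_1^{i-1}+y_i\mathbf{e}_i)-f(\mathbf{x}+[\mathbf{y}]_1^{i-1})$ with $f(\mathbf{x}+y_i\mathbf{e}_i)-f(\mathbf{x})$: these are single-coordinate increments of size $y_i$ in direction $i$, evaluated at base points $\mathbf{x}$ and $\mathbf{x}+[\mathbf{y}]_1^{i-1}$, which satisfy $\mathbf{x}\le\mathbf{x}+[\mathbf{y}]_1^{i-1}$ and agree in coordinate $i$ (both have value $x_i$ there). Hence \refdef{def:weak_DR_property} (equivalently \refprop{prop:group_weak_DR_property}) applies and gives $f(\mathbf{x}+y_i\mathbf{e}_i)-f(\mathbf{x})\ge f(\mathbf{x}+[\mathbf{y}]_1^{i-1}+y_i\mathbf{e}_i)-f(\mathbf{x}+[\mathbf{y}]_1^{i-1})$. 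Summing over $i$ yields the claim, so $\eta\ge 1$; combined with the already-noted fact that $\eta\le 1$ always, $\eta=1$.

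\textbf{Main obstacle.} The forward direction is routine telescoping. The real care is needed in the $(\Leftarrow)$ direction: one must recognize that the hypothesis $\eta=1$ is quantified over \emph{all} $\mathbf{x},\mathbf{y}$, so it may be instantiated with two-coordinate increments, and then that a general weak DR step (with $\mathbf{u}\le\mathbf{v}$ differing in many coordinates) reduces to chaining single-coordinate steps. I should also double-check the domain/feasibility bookkeeping (that all intermediate points $\mathbf{x}+[\mathbf{y}]_1^{i-1}$, $\mathbf{u}+c\mathbf{e}_\ell+k\mathbf{e}_j$, etc. lie in the product domain $\setbf{X}=\prod_i\set{X}_i$, which holds by convexity of intervals since they are coordinatewise between admissible points), but this is standard and parallels the proofs of \refprop{prop:group_weak_DR_property} and \refprop{prop:group_DR_property} already in the paper.
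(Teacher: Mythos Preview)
Your proposal is correct and follows essentially the same approach as the paper: the $(\Rightarrow)$ direction is the identical telescoping argument, and for $(\Leftarrow)$ both you and the paper instantiate $\eta=1$ with a two-coordinate increment $\mathbf{y}=c\mathbf{e}_\ell+k\mathbf{e}_j$ to obtain the pairwise submodularity inequality. The only cosmetic difference is that the paper stops there and invokes the known equivalence ``submodular $\Leftrightarrow$ weakly DR-submodular'' (stated in \refsec{sec:background}), whereas you spell out the coordinate-by-coordinate chaining explicitly; the content is the same.
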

\begin{proof}
If $f$ is weakly DR-submodular (\refdef{def:weak_DR_property}), then for any $\mathbf{x}, \mathbf{y} \in \setbf{X}$, 
\begin{align*}
& \sum_{i=1}^d  f(\mathbf{x} + y_i \mathbf{e}_i) - f(\mathbf{x})\\
&  \geq \sum_{i=1}^d  f(\mathbf{x} + [\mathbf{y}]_1^{i}) - f(\mathbf{x} +   [\mathbf{y}]_1^{i-1}) = f(\mathbf{x} +  \mathbf{y}) - f(\mathbf{x}).
\end{align*}
Assume now $f$ has submodularity ratio $\eta = 1$. We prove that $f$ is weakly DR-submodular by proving that it is submodular. Hence, we want to prove that for all $\mathbf{x} \in \setbf{X}$, $\forall i \neq j$ and $a_i,a_j >0$ s.t. $x_i + a_i \in \mathcal{X}_i$, $x_j + a_j \in \mathcal{X}_j$,
\begin{align}\label{sub_cond}
f(\mathbf{x} + a_i \mathbf{e}_i ) - & f(\mathbf{x})  \geq    \\&  f(\mathbf{x} + a_i \mathbf{e}_i+  a_j \mathbf{e}_j) - f(\mathbf{x} + a_j \mathbf{e}_j) \,. \nonumber
\end{align}
Consider $\mathbf{y} = a_i\mathbf{e}_i + a_j \mathbf{e}_j \in \setbf{X}$. Since $f$ has submodularity ratio $\eta = 1$, we have 
\begin{align*}
f(\mathbf{x} + a_i \mathbf{e}_i ) - f(\mathbf{x}) + f(\mathbf{x} + a_j \mathbf{e}_j ) - f(\mathbf{x}) \\ \geq    f(\mathbf{x} + a_i \mathbf{e}_i  + a_j \mathbf{e}_j) - f(\mathbf{x}) \, ,
\end{align*}
which is equivalent to the submodularity condition \eqref{sub_cond}.
\end{proof}

An example of functions with submodularity ratio $\eta >0$ is the product between an affine and a weakly DR-submodular function, as stated in the following proposition. 

\begin{proposition}
\label{prop:product_modular_weak_submodular}
Let $f,\rho : \setbf{X} \subseteq \reals{n}_+ \rightarrow \reals{}_+$ be two monotone functions, with $f$ weakly DR-submodular, and $g$ affine such that $\rho(\mathbf{x}) = \mathbf{a}^\top \mathbf{x} +b$ with $\mathbf{a}\geq \mathbf{0}$ and $b>0$. Then, provided that $\setbf{X}$ is bounded, the product $g(\mathbf{x}) := f(\mathbf{x}) \rho(\mathbf{x})$ has submodularity ratio $\eta = \inf_{i\in [n], \mathbf{x} \in \setbf{X}} \frac{b}{b + \sum_{j\neq i}a_j x_j} >0$.
\end{proposition}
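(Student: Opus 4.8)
The plan is to verify \refdef{def:weak_DR_sub_ratio} for $g=f\rho$ directly with the stated $\eta$, peeling off the affine factor and reducing everything to one elementary inequality. Fix $\mathbf{x},\mathbf{y}\in\setbf{X}$ with $\mathbf{x}+\mathbf{y}\in\setbf{X}$ and abbreviate $\rho_0:=\rho(\mathbf{x})=\mathbf{a}^\top\mathbf{x}+b$, $F_0:=f(\mathbf{x})$, $\delta_i:=f(\mathbf{x}+y_i\mathbf{e}_i)-f(\mathbf{x})$, $\delta:=f(\mathbf{x}+\mathbf{y})-f(\mathbf{x})$, $C:=\mathbf{a}^\top\mathbf{y}$ and $w_i:=\rho_0+a_iy_i$. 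Since $\rho$ is affine, $\rho(\mathbf{x}+y_i\mathbf{e}_i)=\mathbf{a}^\top\mathbf{x}+a_iy_i+b=w_i$ and $\rho(\mathbf{x}+\mathbf{y})=\rho_0+C$, so a direct expansion of $g=f\rho$ gives $g(\mathbf{x}+y_i\mathbf{e}_i)-g(\mathbf{x})=F_0a_iy_i+w_i\delta_i$ and $g(\mathbf{x}+\mathbf{y})-g(\mathbf{x})=F_0C+(\rho_0+C)\delta$. Hence the inequality $\sum_i[g(\mathbf{x}+y_i\mathbf{e}_i)-g(\mathbf{x})]\ge\eta[g(\mathbf{x}+\mathbf{y})-g(\mathbf{x})]$ is equivalent to
\begin{equation*}
(1-\eta)\,F_0\,C+\sum_{i=1}^n w_i\delta_i\;\ge\;\eta\,(\rho_0+C)\,\delta\,.
\end{equation*}

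Next I would bring in the hypotheses on $f$. Monotonicity together with $f\ge 0$ gives $F_0\ge 0$ and $\delta_i,\delta\ge 0$, and since $f$ is weakly DR-submodular, \refprop{prop:weak_submodularity_ratio} shows that $f$ has submodularity ratio $1$, i.e. $\sum_i\delta_i\ge\delta$. As $\rho_0\ge b>0$ and $a_iy_i\ge 0$ we have $w_i>0$, whence
\begin{equation*}
\sum_{i=1}^n w_i\delta_i\;\ge\;\big(\min_i w_i\big)\sum_{i=1}^n\delta_i\;\ge\;\big(\min_i w_i\big)\,\delta\;=\;\big(\rho_0+\min_i a_iy_i\big)\,\delta\,,
\end{equation*}
while $(1-\eta)F_0C\ge 0$ because $\eta\in[0,1]$ is clear from the formula. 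Therefore it suffices to prove $\rho_0+\min_i a_iy_i\ge\eta\,(\rho_0+C)$.

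For this last step, let $i_0$ be an index attaining $\min_i a_iy_i$ and set $s:=\sum_{j\neq i_0}a_j(x_j+y_j)\ge 0$. Using $\rho_0+C=\mathbf{a}^\top(\mathbf{x}+\mathbf{y})+b=b+s+a_{i_0}(x_{i_0}+y_{i_0})$ and $\rho_0+a_{i_0}y_{i_0}=b+a_{i_0}(x_{i_0}+y_{i_0})+\sum_{j\neq i_0}a_jx_j$, clearing denominators turns $\frac{\rho_0+a_{i_0}y_{i_0}}{\rho_0+C}\ge\frac{b}{b+s}$ into $s\cdot a_{i_0}(x_{i_0}+y_{i_0})+(b+s)\sum_{j\neq i_0}a_jx_j\ge 0$, which is evident; and $\frac{b}{b+s}\ge\eta$ because $\mathbf{x}+\mathbf{y}\in\setbf{X}$, so $\frac{b}{b+s}$ is precisely the value $\frac{b}{\,b+\sum_{j\neq i_0}a_jz_j\,}$ at $\mathbf{z}=\mathbf{x}+\mathbf{y}$, one of the terms in the infimum defining $\eta$. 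Chaining the three displays proves the claim; moreover $\eta>0$ since boundedness of $\setbf{X}$ bounds each $\sum_{j\neq i}a_jx_j$, and the value is not improvable, as one checks by taking $f$ a coordinate projection onto the index realizing the infimum and letting $\mathbf{x},\mathbf{y}$ concentrate the $f$-increment on that coordinate.

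The step I expect to be the main obstacle is obtaining the \emph{sharp} constant: the obvious bound $\sum_i w_i\delta_i\ge\rho_0\sum_i\delta_i\ge\rho_0\delta$ only yields the weaker ratio $\inf\rho(\mathbf{x})/\rho(\mathbf{x}+\mathbf{y})$, so one must retain the term $\min_i a_iy_i$ — intuitively, route the whole $f$-increment through the ``cheapest'' coordinate $i_0$ — and then recognize that the one-line cross-multiplication above is exactly what converts $\frac{\rho_0+a_{i_0}y_{i_0}}{\rho_0+C}$ into the stated $\inf_{i,\mathbf{x}}\frac{b}{\,b+\sum_{j\neq i}a_jx_j\,}$.
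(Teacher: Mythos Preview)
Your proof is correct and follows essentially the same route as the paper: decompose $g(\mathbf{x}+\mathbf{z})-g(\mathbf{x})=\rho(\mathbf{x}+\mathbf{z})[f(\mathbf{x}+\mathbf{z})-f(\mathbf{x})]+f(\mathbf{x})\,\mathbf{a}^\top\mathbf{z}$, pull out $\min_i\rho(\mathbf{x}+y_i\mathbf{e}_i)$ from the weighted sum, apply the submodularity ratio $1$ of $f$, and compare to $\rho(\mathbf{x}+\mathbf{y})$. Your cross-multiplication step relating $\frac{\rho_0+a_{i_0}y_{i_0}}{\rho_0+C}$ to $\frac{b}{b+\sum_{j\neq i_0}a_j(x_j+y_j)}$ in fact makes explicit a detail the paper leaves as a bare assertion when passing from $\eta(\mathbf{x},\mathbf{y})$ to the stated infimum.
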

\begin{proof}
Note that since $\rho$ is affine, for any $\mathbf{x}, \mathbf{y} \in \setbf{X}$ we have that $g( \mathbf{x} +\mathbf{y}) -g( \mathbf{x})   =  f( \mathbf{x} +\mathbf{y}) \rho( \mathbf{x} +\mathbf{y}) - f( \mathbf{x})\rho( \mathbf{x})=\rho( \mathbf{x} +\mathbf{y}) [ f( \mathbf{x} +\mathbf{y})  - f( \mathbf{x})] + f(\mathbf{x}) \:( \mathbf{a}^\top \mathbf{y})$. For any pair $\mathbf{x}, \mathbf{y} \in \setbf{X}$  we have: 
\begin{align*}
&\sum_{i=1}^n \big[ g(\mathbf{x} + y_i \mathbf{e}_i) - g(\mathbf{x}) \big]  \\
& = \sum_{i=1}^n \rho( \mathbf{x} +y_i \mathbf{e}_i) [ f( \mathbf{x} +y_i \mathbf{e}_i)  - f( \mathbf{x})] + f(\mathbf{x}) \:( y_i \mathbf{a}^\top  \mathbf{e}_i) \\ 
& \geq \min_{i \in [n]}\rho(\mathbf{x} +y_i \mathbf{e}_i)\sum_{i=1}^n f( \mathbf{x} +y_i \mathbf{e}_i)  - f( \mathbf{x}) + f(\mathbf{x}) \:( \mathbf{a}^\top  \mathbf{y})\\
& \geq \underbrace{\frac{\min_{i \in [n]}\rho(\mathbf{x} +y_i \mathbf{e}_i)}{\rho(\mathbf{x} + \mathbf{y})}}_{:= \eta(\mathbf{x}, \mathbf{y})} \Big( \rho( \mathbf{x} +\mathbf{y}) [ f( \mathbf{x} +\mathbf{y}) - f( \mathbf{x})]  \\[-2.5em]
& \phantom{\frac{\min_{i \in [n]}\rho(\mathbf{x} +y_i \mathbf{e}_i)}{\rho(\mathbf{x} + \mathbf{y})} \Big(\qquad \qquad } \: + f(\mathbf{x}) \:( \mathbf{a}^\top \mathbf{y}) \Big)\\
&  =\eta(\mathbf{x}, \mathbf{y}) \: [g( \mathbf{x} +\mathbf{y}) -g( \mathbf{x})] \,.
\end{align*}
The first inequality follows since $\rho$ is affine non-negative and $f$ is non-negative. The second inequality is due to $f$ being weakly DR-submodular ($f$ has submodularity ratio $\eta = 1$) and $0 < \eta(\mathbf{x}, \mathbf{y}) \leq 1$, which holds because $b > 0$ and $\mathbf{a} \geq \mathbf{0}$. Hence, it follows that $\gamma$ has submodularity ratio
\begin{equation*}
\eta := \inf_{\substack{\mathbf{x}, \mathbf{y} \in \setbf{X} :\\ \mathbf{x} + \mathbf{y} \in \setbf{X}  }}{  \eta(\mathbf{x}, \mathbf{y}) } = \inf_{i\in [n], \mathbf{y} \in \setbf{X}} \frac{b}{b + \sum_{j\neq i}a_j y_j} >0 \,. 
\end{equation*}
\end{proof}

\subsubsection{Related notion by \cite{hassani2017}}
A generalization of submodular continuous functions was also provided in \cite{hassani2017} together with provable maximization guarantees. However, it has different implications than the submodularity ratio defined above. In fact, \cite{hassani2017} considered the class of differentiable functions $f: \setbf{X} \subseteq \reals{n}_+ \rightarrow \mathbb{R}$ with parameter $\eta$ defined as
\begin{equation*}
\eta = \inf_{\mathbf{x}, \mathbf{y} \in \setbf{X},\mathbf{x} \leq \mathbf{y}  } \inf_{i \in [n]} \frac{[\nabla f(\mathbf{x}) ]_i}{[\nabla f(\mathbf{y}) ]_i} \,.
\end{equation*}
For monotone functions $\eta \in [0,1]$, and a differentiable function is DR-submodular iff $\eta = 1$ \cite{hassani2017}. Note that the parameter $\eta$ of \cite{hassani2017} generalizes the \texttt{DR} property of $f$, while our submodularity ratio $\eta$ generalizes the \texttt{weak DR} property.

\subsection{Relations with the generalized submodularity ratio of \refdef{def:playerwise-submodularity_ratio}}
In \refprop{prop:weak_submodularity_ratio} we saw that submodularity ratio $\eta =1$ is a necessary and sufficient condition for weak DR-submodularity. In contrast, a generalized submdoularity ratio (\refdef{def:playerwise-submodularity_ratio}) $\eta = 1$ is only necessary for the social function $\gamma$ to be weakly DR-submodular. This is stated in the following proposition.  For non submodular $\gamma$, no relation can be established between submodularity ratio of \refdef{def:weak_DR_sub_ratio} and generalized submodularity ratio of \refdef{def:playerwise-submodularity_ratio}.

\begin{proposition}
\label{prop:weak_DR_implies_playerwise_ratio_one}
Given a game $\set{G}=(N, \{ \set{S}_i\}_{i=1}^N,  \{ \pi_i\}_{i=1}^N , \gamma)$. If $\gamma$ is weakly DR-submodular, then  $\gamma$ has generalized submodularity ratio $\eta = 1$.
\end{proposition}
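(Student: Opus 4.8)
The plan is to show that the defining inequality of the generalized submodularity ratio (\refdef{def:playerwise-submodularity_ratio}) is satisfied with $\eta = 1$; since it is already noted that $\eta \le 1$ always, this forces $\eta = 1$. Concretely, I would fix an arbitrary pair of outcomes $\mathbf{s}, \mathbf{s}' \in \setbf{S}$ and, mimicking the telescoping argument used for the third inequality in the smoothness proof of \refsec{sec:proof_of_thm1}, decompose $\gamma(\mathbf{s} + \mathbf{s}') - \gamma(\mathbf{s})$ block by block along the monotone path $\mathbf{s} = \mathbf{s} + [\mathbf{s}']_1^{0} \le \mathbf{s} + [\mathbf{s}']_1^{1} \le \cdots \le \mathbf{s} + [\mathbf{s}']_1^{N} = \mathbf{s} + \mathbf{s}'$, where (with the same slight abuse of notation as in \refsec{sec:proof_of_thm1}) $[\mathbf{s}']_1^{i} := (\mathbf{s}'_1, \ldots, \mathbf{s}'_i, \mathbf{0}, \ldots, \mathbf{0})$ keeps the components of $\mathbf{s}'$ belonging to the first $i$ players.

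The key step is the term-by-term comparison. For each player $i$ I would apply \refprop{prop:group_weak_DR_property} with $\mathbf{x} = \mathbf{s}$, $\mathbf{y} = \mathbf{s} + [\mathbf{s}']_1^{i-1}$ (so $\mathbf{x} \le \mathbf{y}$), and increment $\mathbf{z}$ equal to $\mathbf{s}'_i$ placed in player $i$'s block and zero elsewhere. The observation that makes \refprop{prop:group_weak_DR_property} applicable is that the support of $\mathbf{z}$ (block $i$) is disjoint from the support of $\mathbf{y} - \mathbf{x} = [\mathbf{s}']_1^{i-1}$ (blocks $1,\dots,i-1$), so the hypothesis $z_l = 0$ for all $l$ with $y_l > x_l$ holds. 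Since $\mathbf{x} + \mathbf{z} = (\mathbf{s}_i + \mathbf{s}_i', \mathbf{s}_{-i})$ and $\mathbf{y} + \mathbf{z} = \mathbf{s} + [\mathbf{s}']_1^{i}$, this yields
\[
\gamma(\mathbf{s}_i + \mathbf{s}_i', \mathbf{s}_{-i}) - \gamma(\mathbf{s}) \;\ge\; \gamma(\mathbf{s} + [\mathbf{s}']_1^{i}) - \gamma(\mathbf{s} + [\mathbf{s}']_1^{i-1}) \,.
\]
Summing over $i = 1, \ldots, N$ and telescoping the right-hand side (using $[\mathbf{s}']_1^{0} = \mathbf{0}$ and $[\mathbf{s}']_1^{N} = \mathbf{s}'$) gives exactly $\sum_{i=1}^N [\gamma(\mathbf{s}_i + \mathbf{s}_i', \mathbf{s}_{-i}) - \gamma(\mathbf{s})] \ge \gamma(\mathbf{s} + \mathbf{s}') - \gamma(\mathbf{s})$, i.e. the required inequality with $\eta = 1$, and we conclude.

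I do not expect a genuine obstacle: the only routine points to verify are that every point appearing — e.g. $\mathbf{s} + [\mathbf{s}']_1^{i}$ and $(\mathbf{s}_i + \mathbf{s}_i', \mathbf{s}_{-i})$ — lies in the domain $\reals{Nd}_+$ on which $\gamma$ is (weakly DR-submodular and) defined, which is immediate since $\mathbf{s}, \mathbf{s}' \ge \mathbf{0}$, so \refprop{prop:group_weak_DR_property} can be invoked without domain restrictions. The one thing to get right is the bookkeeping of supports that licenses the \texttt{weak DR}-based \refprop{prop:group_weak_DR_property} rather than the stronger \texttt{DR} property — precisely the distinction that is already emphasized in \refsec{sec:proof_of_thm1}.
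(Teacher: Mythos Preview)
Your proposal is correct and follows essentially the same approach as the paper: both define the telescoping path $\mathbf{s} + [\mathbf{s}']_1^{i}$, apply weak DR-submodularity (via \refprop{prop:group_weak_DR_property}) term by term using that the increment in block $i$ is disjoint from the already-added blocks $1,\dots,i-1$, and then telescope. Your write-up is in fact more explicit than the paper's about verifying the support condition needed for \refprop{prop:group_weak_DR_property}, which the paper leaves implicit by simply saying ``the inequality follows since $\gamma$ is weakly DR-submodular.''
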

\begin{proof}
Consider any pair of outcomes $\mathbf{s}, \mathbf{s}' \in \setbf{S}$. For $i \in \{0, \ldots, N\}$, with abuse of notation we define $[\mathbf{s}']_1^i := (\mathbf{s}'_1,\ldots, \mathbf{s}'_i,\mathbf{0}, \ldots,\mathbf{0})$ with $[ \mathbf{s}']_1^0 = \mathbf{0}$. We have, 
\begin{align*}
 & \sum_{i=1}^N \gamma(\mathbf{s}_i +  \mathbf{s}'_i, \mathbf{s}_{-i}) - \gamma(\mathbf{s})\\
& \geq  \sum_{i=1}^N \gamma( \mathbf{s} +[ \mathbf{s}']_1^i) - \gamma(\mathbf{s} +[ \mathbf{s}']_1^{i-1} ) \\
& = \gamma(\mathbf{s} + \mathbf{s}') - \gamma(\mathbf{s}) \, ,
\end{align*}
where the inequality follows since $\gamma$ is weakly DR-submodular and the equality is a telescoping sum.
\end{proof}

Similarly to \refprop{prop:product_modular_weak_submodular} in the previous section, in the following proposition we show that social functions $\gamma$ defined as product of weakly DR-submodular functions and affine functions have generalized submodularity ratio $\eta >0$.

\begin{proposition}
\label{prop:product_modular_weak_submodular_playerwise}
Given a game $\set{G}=(N, \{ \set{S}_i\}_{i=1}^N,  \{ \pi_i\}_{i=1}^N , \gamma)$. Let $\gamma$ be defined as $\gamma(\mathbf{s}) := f(\mathbf{x}) \rho(\mathbf{x})$ with $f,\rho : \reals{Nd}_+ \rightarrow \reals{}_+$ be two monotone functions, with $f$ weakly DR-submodular, and $g$ affine such that $\rho(\mathbf{x}) = \mathbf{a}^\top \mathbf{x} +b$ with $\mathbf{a} = (\mathbf{a}_1, \ldots, \mathbf{a}_N)\geq \mathbf{0}$ and $b>0$. Then, $\gamma$ has generalized submodularity ratio $\eta = \inf_{i\in [N], \mathbf{s} \in \setbf{S}} \frac{b}{b + \sum_{j\neq i}\mathbf{a}_j^\top \mathbf{s}_j} >0$.
\end{proposition}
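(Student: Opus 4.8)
The plan is to transcribe the proof of \refprop{prop:product_modular_weak_submodular} from single coordinates to player blocks. First I would record the two affine identities that follow from $\rho(\mathbf{s}) = \mathbf{a}^\top\mathbf{s} + b$ with $\mathbf{a} = (\mathbf{a}_1,\dots,\mathbf{a}_N)$: since $\rho(\mathbf{s}_i+\mathbf{s}'_i,\mathbf{s}_{-i}) = \rho(\mathbf{s}) + \mathbf{a}_i^\top\mathbf{s}'_i$ and $\rho(\mathbf{s}) = \rho(\mathbf{s}+\mathbf{s}') - \mathbf{a}^\top\mathbf{s}'$, expanding $\gamma = f\rho$ gives
\begin{align*}
\gamma(\mathbf{s}_i+\mathbf{s}'_i,\mathbf{s}_{-i}) - \gamma(\mathbf{s}) &= \rho(\mathbf{s}_i+\mathbf{s}'_i,\mathbf{s}_{-i})\big[f(\mathbf{s}_i+\mathbf{s}'_i,\mathbf{s}_{-i}) - f(\mathbf{s})\big] + f(\mathbf{s})\,\mathbf{a}_i^\top\mathbf{s}'_i,\\
\gamma(\mathbf{s}+\mathbf{s}') - \gamma(\mathbf{s}) &= \rho(\mathbf{s}+\mathbf{s}')\big[f(\mathbf{s}+\mathbf{s}') - f(\mathbf{s})\big] + f(\mathbf{s})\,\mathbf{a}^\top\mathbf{s}'.
\end{align*}

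Next I would sum the first identity over $i\in[N]$ and bound it below. Monotonicity of $f$ makes each bracket $f(\mathbf{s}_i+\mathbf{s}'_i,\mathbf{s}_{-i}) - f(\mathbf{s})\ge 0$, and $\rho\ge 0$, so $\min_i\rho(\mathbf{s}_i+\mathbf{s}'_i,\mathbf{s}_{-i})$ can be factored out of the corresponding sum; then weak DR-submodularity of $f$ together with a telescoping argument (as in the proof of \refprop{prop:weak_DR_implies_playerwise_ratio_one}, applied to $f$ in place of $\gamma$) gives $\sum_i[f(\mathbf{s}_i+\mathbf{s}'_i,\mathbf{s}_{-i}) - f(\mathbf{s})]\ge f(\mathbf{s}+\mathbf{s}') - f(\mathbf{s})\ge 0$, while $\sum_i\mathbf{a}_i^\top\mathbf{s}'_i = \mathbf{a}^\top\mathbf{s}'$, whence
\[
\sum_{i=1}^N\big[\gamma(\mathbf{s}_i+\mathbf{s}'_i,\mathbf{s}_{-i}) - \gamma(\mathbf{s})\big] \ge \min_{i}\rho(\mathbf{s}_i+\mathbf{s}'_i,\mathbf{s}_{-i})\,\big[f(\mathbf{s}+\mathbf{s}')-f(\mathbf{s})\big] + f(\mathbf{s})\,\mathbf{a}^\top\mathbf{s}'.
\]
I then set $\eta(\mathbf{s},\mathbf{s}') := \min_i\rho(\mathbf{s}_i+\mathbf{s}'_i,\mathbf{s}_{-i})/\rho(\mathbf{s}+\mathbf{s}')$; because $b>0$ and $\mathbf{a}\ge\mathbf{0}$ one has $b\le\rho(\mathbf{s}_i+\mathbf{s}'_i,\mathbf{s}_{-i})\le\rho(\mathbf{s}+\mathbf{s}')$, so $\eta(\mathbf{s},\mathbf{s}')\in(0,1]$. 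Using $\eta(\mathbf{s},\mathbf{s}')\le 1$ and $f(\mathbf{s})\mathbf{a}^\top\mathbf{s}'\ge 0$ to weaken the last term to $\eta(\mathbf{s},\mathbf{s}')f(\mathbf{s})\mathbf{a}^\top\mathbf{s}'$, the right-hand side equals $\eta(\mathbf{s},\mathbf{s}')[\gamma(\mathbf{s}+\mathbf{s}') - \gamma(\mathbf{s})]$ by the second identity above. Thus $\gamma$ has generalized submodularity ratio $\eta = \inf_{\mathbf{s},\mathbf{s}'\in\setbf{S}}\eta(\mathbf{s},\mathbf{s}')$.

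It remains to evaluate this infimum. Writing $c = \sum_j\mathbf{a}_j^\top\mathbf{s}_j$ and $e_i = \mathbf{a}_i^\top\mathbf{s}'_i$, one has $\eta(\mathbf{s},\mathbf{s}') = \min_i\frac{b+c+e_i}{b+c+\sum_j e_j}$, and the elementary inequality $\frac{b+c+e_i}{b+c+\sum_j e_j}\ge\frac{b}{b+\sum_{j\neq i}e_j}$ — which upon cross-multiplying reduces to $(c+e_i)\sum_{j\neq i}e_j\ge 0$ — gives $\eta(\mathbf{s},\mathbf{s}')\ge\min_i\frac{b}{b+\sum_{j\neq i}\mathbf{a}_j^\top\mathbf{s}'_j}$, hence $\eta\ge\inf_{i\in[N],\,\mathbf{s}\in\setbf{S}}\frac{b}{b+\sum_{j\neq i}\mathbf{a}_j^\top\mathbf{s}_j}$, which is strictly positive when $\setbf{S}$ is bounded. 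For the matching upper bound I would choose near-optimal outcomes, taking $\mathbf{s}$ with $c\to 0$ and $\mathbf{s}'_{i^\star} = \mathbf{0}$ at the index $i^\star$ attaining the outer infimum, which drives $\eta(\mathbf{s},\mathbf{s}')$ to $\frac{b}{b+\sum_{j\neq i^\star}\mathbf{a}_j^\top\mathbf{s}'_j}$ (this step uses $\mathbf{0}\in\setbf{S}$, as holds in the sensor-coverage instance). I expect the main obstacle to be exactly this last step and the bookkeeping around it: correctly propagating the per-outcome factor $\eta(\mathbf{s},\mathbf{s}')$ through the two affine identities and getting the elementary ratio inequality in the right direction; the rest is a mechanical block-wise rerun of the coordinatewise proof.
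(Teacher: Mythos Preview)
Your proposal is correct and follows the same route as the paper: the paper explicitly says the proof is identical to that of \refprop{prop:product_modular_weak_submodular} with the single-coordinate increments $y_i\mathbf{e}_i$ replaced by player blocks $\mathbf{s}'_i$, which is precisely the transcription you carry out (same affine identities, same factoring of $\min_i\rho$, same use of weak DR-submodularity of $f$ for the telescoping bound, and the same definition of $\eta(\mathbf{s},\mathbf{s}')$). Your final paragraph on evaluating the infimum via the elementary inequality $(c+e_i)\sum_{j\neq i}e_j\ge 0$ and the matching upper bound is in fact more detailed than the paper, which in \refprop{prop:product_modular_weak_submodular} simply asserts $\inf_{\mathbf{s},\mathbf{s}'}\eta(\mathbf{s},\mathbf{s}') = \inf_{i,\mathbf{s}}\frac{b}{b+\sum_{j\neq i}\mathbf{a}_j^\top\mathbf{s}_j}$ without spelling out either direction.
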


\begin{proof}
The proof is equivalent to the proof of \refprop{prop:product_modular_weak_submodular}, with the only difference that $ \mathbf{s}'_i$ belong to $\reals{d}_+$ instead of $\reals{}_+$.
\end{proof}

Note that for the game considered in the previous proposition, using  \refprop{prop:product_modular_weak_submodular} one could also affirm that $\gamma$ has submodularity ratio $\eta = \inf_{i\in [Nd], \mathbf{s} \in \setbf{S}} \frac{b}{b + \sum_{j\neq i} [\mathbf{a}]_j [\mathbf{s}]_j} >0$ which, unless $d=1$, is strictly smaller than its generalized submodularity ratio.

\subsection{Proof of \refthm{thm:weak_theorem}}
The proof is equivalent to the proof of \refthm{thm:main_theorem}, with the only difference that here we prove that $\set{G}$ is a \emph{($ \eta,  \eta$)-smooth} game in the framework of \cite{roughgarden2015}. Then, it follows that $PoA_{CCE} \leq (1 + \eta)/\eta$. 

For the smoothness proof, consider any pair of outcomes $\mathbf{s},\mathbf{s}^\star \in \setbf{S}$. We have: 
\begin{align*}
& \sum_{i=1}^N \pi_i(\mathbf{s}_i^\star, \mathbf{s}_{-i}) \geq \sum_{i=1}^N \gamma(\mathbf{s}_i^\star, \mathbf{s}_{-i}) - \gamma(0, \mathbf{s}_{-i})  \\ 
& \geq   \sum_{i=1}^N \gamma(\mathbf{s}_i^\star + \mathbf{s}_i, \mathbf{s}_{-i}) - \gamma(\mathbf{s})  \\
& =  \eta \: \gamma(\mathbf{s} + \mathbf{s}^\star ) - \eta \: \gamma(\mathbf{s}) \,. 
\end{align*}
The first inequality is due to condition ii) of \refdef{def:valid_continuous_game}. The second inequality follows since $\gamma$ is playerwise DR-submodular (applying \refprop{prop:group_DR_property} for each player $i$) and the second inequality from $\gamma$ having generalized submodularity ratio $\eta$.
\hfill \qed
 
\subsection{Analysis of the sensor coverage problem with non-submodular objective}\label{sec:extension_example}
 We analyze the sensor coverage problem with non-submodular objective defined in \refsec{sec:extension}, where $\gamma(\mathbf{x}) = \sum_{r \in [d]}w_r(\mathbf{x}) \: P(r,\mathbf{x})$ with $w_r(\mathbf{x}) = \mathbf{a}_r \: \frac{\sum_{i=1}^N[\mathbf{x}_i]_r}{N} + b_r$.  Note that by \refprop{prop:product_modular_weak_submodular_playerwise}, the function $\gamma_r(\mathbf{x}):=  w_r(\mathbf{x}) \: P(r,\mathbf{x})$ has generalized submodularity ratio $\eta >0$, hence it is not hard to show that $\gamma(\mathbf{x}) = \sum_{r \in [d]} \gamma_r(\mathbf{x})$ shares the same property. Moreover, there exist parameters $\mathbf{a}_r,b_r$ for which $\gamma$ is not submodular. Interestingly, $\gamma$ is convave in each $\set{X}_i$. In fact, $\gamma_r$'s are concave in each $\set{X}_i$ since $P(r,\mathbf{x})$'s are concave in each $\set{X}_i$ and $w_r$'s are positive affine functions. Moreover, $\gamma$ is playerwise DR-submodular since the $(d\times d)$ blocks on the diagonal of its Hessian are diagonal (and their entries are non-positive, by concavity of $\gamma$ in each $\set{X}_i$). 
 
To maximize $\gamma$, as outlined in \refsec{sec:distributed_maximization}, we can set up a game $\set{G}=(N, \{ \set{S}_i\}_{i=1}^N,  \{ \pi_i\}_{i=1}^N , \gamma)$ where for each player $i$, $\set{S}_i = \set{X}_i$, and $\pi_i(\mathbf{s}) = \gamma(\mathbf{s}) - \gamma(\mathbf{0}, \mathbf{s}_{-i})$ for every outcome $\mathbf{s} \in \setbf{S} = \setbf{X}$. Hence, condition ii) of \refdef{def:valid_continuous_game} is satisfied with equality. 
Following the proof of \refthm{thm:weak_theorem}, we have that: 
\begin{equation*}
\sum_{i=1}^N \pi_i(\mathbf{s}_i^\star, \mathbf{s}_{-i}) \geq \eta \: \gamma(\mathbf{s} + \mathbf{s}^\star ) - \eta \: \gamma(\mathbf{s})
\end{equation*}
In order to bound $PoA_{CCE}$, the last proof steps of \refsec{sec:proof_of_thm1} still ought to be used. Such steps rely on condition iii), which in \refsec{sec:distributed_maximization} was proved using submodularity of $\gamma$. Although $\gamma$ is not submodular, we prove a weaker version of condition iii) as follows.  By definition of $\gamma_r$ and for every outcome $\mathbf{x}$ we have $\sum_{i=1}^N \gamma_r(\mathbf{s}) - \gamma_r(\mathbf{0},\mathbf{s}_{-i}) = \sum_{i=1}^N  w_r(\mathbf{x})[P(r,\mathbf{s}) - P(r, (\mathbf{0},\mathbf{s}_{-i})) ] + [w_r(\mathbf{s}_i, \mathbf{0})  -w_r(\mathbf{0})] P(r, (\mathbf{0},\mathbf{s}_{-i})) \leq   w_r(\mathbf{x}) P(r,\mathbf{s}) + P(r,\mathbf{s}) \sum_{i=1}^N [w_r(\mathbf{s}_i, \mathbf{0})  -w_r(\mathbf{0})] = (1 + \frac{ w_r(\mathbf{x}) - w_r(\mathbf{0}) }{w_r(\mathbf{x})}) \gamma_r (\mathbf{x}) \leq 2 \gamma_r (\mathbf{x})$. The equalities are due to $w_r$ being affine, the first inequality is due to $P(r,\cdot)$ being submodular and monotone, and the last inequality holds since $w_r$ is positive and monotone. Hence, from the inequalities above we have $\sum_{i=1}^N \pi_i(\mathbf{x}) = \sum_{i=1}^N \gamma(\mathbf{x}) - \gamma(\mathbf{0}, \mathbf{x}_{-i}) \leq 2 \gamma(\mathbf{x})$. Note that a tighter condition can also be derived depending on the functions $w_r$'s, using $(1 + \max_{\mathbf{x} \in \setbf{X},r\in [d]}\frac{ w_r(\mathbf{x}) - w_r(\mathbf{0}) }{w_r(\mathbf{x})})$ in place of 2. We will now use such condition in the same manner condition iii) was used in \refsec{sec:proof_of_thm1}.  Let $\mathbf{s}^\star = \argmax_{\mathbf{s}\in \setbf{S}}\gamma(\mathbf{s})$. Then, for any CCE $\sigma$ of $\set{G}$ we have
\begin{align*}
\mathbb{E}_{\mathbf{s} \sim \sigma} [ \gamma(\mathbf{s})] & \geq \frac{1}{2}\sum_{i=1}^N \mathbb{E}_{\mathbf{s} \sim \sigma} [ \pi_i(\mathbf{s})]  \geq \frac{1}{2}\sum_{i=1}^N \mathbb{E}_{\mathbf{s} \sim \sigma} [ \pi_i(\mathbf{s}_i^\star, \mathbf{s}_{-i})] \\
& \geq \frac{\eta}{2} \gamma( \mathbf{s}^\star )  - \frac{\eta}{2} \: \mathbb{E}_{\mathbf{s} \sim \sigma} [ \gamma(\mathbf{s})] \, .
\end{align*}
Hence, $PoA_{CCE} \leq (1 + 0.5\eta)/0.5\eta$.

\end{document}